\documentclass[11pt]{article}
\usepackage[utf8]{inputenc}
\usepackage{setspace}
\usepackage[margin=1.25in]{geometry}
\usepackage{amsmath,amsfonts,amssymb,amsthm}
\usepackage[authoryear, longnamesfirst]{natbib}
\usepackage{hyperref}
\usepackage{graphicx}
\usepackage{enumitem}
\usepackage{nicematrix}
\usepackage{xcolor}
\usepackage{algorithm}
\usepackage{thmtools}
\usepackage{algorithmic}
\allowdisplaybreaks

\usepackage{mathtools}

\newtheorem{theorem}{Theorem}

\newtheorem{proposition}{Proposition}
\newtheorem{assumption}{Assumption}

\title{A Simple Approximation to the Distribution of the \\ \protect Ridge Regression Estimator\thanks{We would like to thank Bruce Hansen, Lihua Lei, J\"org Stoye, and Dacheng Xiu for very helpful comments and suggestions. Montiel Olea gratefully acknowledges financial support by the National Science Foundation Grant SES-2315600. }}
\author{Jos{\'e} Luis Montiel Olea \and Ryan Strong \and Amilcar Velez
\and Zhuoheng Xu \and Haomin Yu\thanks{All of the authors are affiliated to the Department of Economics at Cornell University. Corresponding author: {{\href{mailto:amilcare@cornell.edu}%
{amilcare@cornell.edu}}}}}
\date{\today}

\begin{document}

\maketitle

\onehalfspacing

\begin{abstract}
We present a simple Gaussian approximation to the finite-sample distribution of the classical ridge regression estimator. Our approximation captures the fact that, in finite samples, the ridge regression estimator trades off bias and variance to reduce estimation and prediction error. Our approximation is based on nonstandard asymptotics where $i)$ we let the estimator's regularization parameter grow proportionally to the sample size; and $ii)$ we treat the population regression coefficients as \emph{local} to the reference vector that defines the estimator's direction of shrinkage. In contrast to other asymptotic approximations in the literature, we allow for general forms of heteroskedasticity and autocorrelation in the data generating process (at the cost of considering a low-dimensional model where the number of covariates is not allowed to grow with the sample size). We use our simple Gaussian approximation to propose two new strategies to select the regularization parameter for the ridge regression estimator. The suggested strategies select the regularization parameter to minimize either average or worst-case excess prediction risk, where risk is computed using our suggested Gaussian approximation.  
\end{abstract}

\noindent\textbf{Keywords:} Heteroskedasticity and autocorrelation, nonstandard asymptotics, prediction risk, Ridge regression, regularization parameter selection.


\newpage 
\section{Introduction}

We have access to a dataset $D_n \equiv \{(y_i,x^{\top}_i)\}_{i=1}^{n}$ consisting of $n$ observations of a real-valued outcome variable, $y_i \in \mathbb{R}$, and a vector of $k$ covariates, $x_i \in \mathbb{R}^{k}$. The dataset $D_n$ is assumed to have been generated by the statistical model
\begin{equation} \label{eqn:statistical_model_introduction} 
y_i = x^{\top}_{i}\beta + \epsilon_i, \quad \{ (x^\top_i, \epsilon_i) \}_{i=1}^{n} \sim \mathbb{P}. 
\end{equation}
We are willing to restrict the data generating processes under consideration by requiring that 
\begin{equation} \label{eqn:assumptions_introduction} 
(1/n) \sum_{i=1}^{n} x_{i} x_{i}^{\top} \overset{p}{\rightarrow} \Sigma, \quad \textrm{ and } \quad  (1/\sqrt{n}) \sum_{i=1}^{n} x_{i} \epsilon_i \overset{d}{\rightarrow} \mathcal{N}_{k}(\mathbf{0},\Omega),
\end{equation}
where both $\Sigma$ and $\Omega$ are positive definite matrices, and $\mathcal{N}_{k}$ denotes a multivariate Gaussian distribution of dimension $k$. These assumptions allow for general forms of heteroskedasticity and autocorrelation in the data generating process.\footnote{We note that \eqref{eqn:statistical_model_introduction} and \eqref{eqn:assumptions_introduction} neither require nor imply that the conditional expectation of the outcome given the covariates is linear. In fact, \eqref{eqn:statistical_model_introduction} and \eqref{eqn:assumptions_introduction} can be viewed as the typical high-level assumptions used in the analysis of the \emph{linear projection model} using the terminology in \citet[Chapter~2, p.~37]{hansen2022econometrics}.} 

We are interested in approximating the finite-sample distribution of the \emph{ridge regression estimator:}
\begin{equation} \label{eqn: ridge estimator formula introduction}
\widehat{\beta}_{\lambda_n} \equiv \left( \sum_{i=1}^{n} x_i x_i^{\top} + \lambda_n \mathbb{I}_k \right)^{-1} \left( \sum_{i=1}^{n} x_i y_i + \lambda_n \beta_0\right). 
\end{equation}
We refer to the nonnegative scalar $\lambda_n$ as the \emph{regularization parameter} and to $\beta_0 \in \mathbb{R}^{k}$ as the \emph{reference vector}. We index the regularization parameter by the sample size to allow for the possibility that it depends on the dataset $D_n$.  

Our first result (Theorem \ref{theorem: big lambda_n approximation}) shows that there is a sense in which, under \eqref{eqn:statistical_model_introduction}-\eqref{eqn:assumptions_introduction}, we can approximate the distribution of $\widehat{\beta}_{\lambda_n}-\beta$ by the distribution: 
\begin{equation} \label{eqn: asy distribution ridge introduction }
 \mathcal{N}_{k} \left( -(\lambda_n/n) (\Sigma + (\lambda_n/n) \mathbb{I}_k)^{-1} (\beta-\beta_0) , (1/n)(\Sigma + (\lambda_n/n) \mathbb{I}_k)^{-1} \Omega (\Sigma + (\lambda_n/n) \mathbb{I}_k)^{-1} \right).
\end{equation}
 Our simple approximation captures the fact that, in finite samples, the choice of regularization parameter and reference vector affect the bias and variance of the ridge regression estimator. We view the mean and variance in the approximation in \eqref{eqn: asy distribution ridge introduction } as providing simple generalizations of common formulae for the bias and variance of the ridge estimator derived conditional on covariates and under i.i.d. sampling; see, for example, \citet[Equations 29.8 and 29.9]{hansen2022econometrics}. The Gaussian approximation in \eqref{eqn: asy distribution ridge introduction } can also be viewed as a generalization of the asymptotic distribution of the ridge estimator presented in the seminal work of  \citet{knight2000asymptotics}, which they derive under conditional homoskedasticity (in our framework, heteroskedasticity and serial autocorrelation are captured in the matrix $\Omega$). 

In order to derive Theorem \ref{theorem: big lambda_n approximation}, we use \emph{nonstandard asymptotics}.\footnote{See \cite{powell2017identification} for an overview of the role of nonstandard asymptotics in modern econometrics.} As we will explain later, we assume that the true coefficient $\beta$ in the model \eqref{eqn:statistical_model_introduction} (henceforth denoted $\beta_n$) depends on the sample size. We treat $\beta_n$ as  \emph{local to the reference vector $\beta_0$} that defines the estimator's direction of shrinkage. More concretely, we assume that:
\begin{equation} \label{eqn: local-to-beta0} 
\sqrt{n}(\beta_n-\beta_0) \rightarrow b,
\end{equation}
for some $b \in \mathbb{R}^{k}$. In addition to \eqref{eqn: local-to-beta0}, our asymptotic analysis accommodates data-dependent regularization parameters that grow proportionally to the sample size and that satisfy
\begin{equation} \label{eqn: ridge regularization limit} 
\lambda_n/n \overset{p}{\rightarrow } \lambda \in [0, \infty).  
\end{equation} 

Our second result (Theorem \ref{theorem: approximation to the prediction risk of ridge}) uses the approximation in \eqref{eqn: asy distribution ridge introduction } to provide an approximation to the \emph{excess prediction risk} of the ridge regression estimator. As we explain later, excess prediction risk is defined as the additional prediction risk relative to an oracle that knows $\beta_n$. Predictions of the outcome variable based on the ridge regression estimator take the form
\begin{equation} \label{eqn: predictions ridge}
\widehat{a}_{\lambda_n}(x) = x^{\top} \widehat{\beta}_{\lambda_n},
\end{equation}
where $\widehat{\beta}_{\lambda_n}$ is defined in \eqref{eqn: ridge estimator formula introduction}. 

The exact finite-sample analysis of the prediction risk of ridge regression is challenging even under a stylized homoskedastic, Gaussian regression model. Several recent papers in the statistics literature provide approximations and lower and upper bound for this prediction risk, including \cite{dobriban2018high}, \cite{hastie2022surprises}, \cite{mourtada2022elementary}, \cite{atanasov2024risk}. We show that it is possible to use our Theorem \ref{theorem: big lambda_n approximation} to provide an approximation to the excess prediction risk of ridge regression as a function of three types of parameters: $\lambda$ (the probability limit of $\lambda_n/n$); $b$ (the parameter controlling the local-to-$\beta_0$ regression coefficient $\beta_n$); and the variance parameters $\Sigma$ and $\Omega$. Importantly, the variance parameters can be consistently estimated, but the local parameter $b$ cannot.   

Our third result (Theorem \ref{theorem: selection of lambda}) presents closed-form, data-driven recommendations for the selection of the regularization parameter $\lambda_n$. Our formulae are designed for the special case in which---under the true unknown data generating process---the covariates are asymptotically uncorrelated and have the same variance; that is, $\Sigma = \sigma^2_x \mathbb{I}_k$, where $\mathbb{I}_k$ is the identity matrix of dimension $k$ and $\sigma^2_x$ is a common variance parameter. Following \cite{hastie2022surprises}, we refer to this setting as one with \emph{isotropic features}.\footnote{Although we note that in \cite{hastie2022surprises} $\Sigma$ equals the identity matrix, whereas we only require $\Sigma = \sigma^2_{x} \mathbb{I}_k$.} The key step in deriving our closed-form formulae is to choose the regularization parameter to optimize the excess prediction risk approximation in Theorem \ref{theorem: approximation to the prediction risk of ridge}. As we discussed before, this approximation depends on the local parameter $b$ and variance parameters $\sigma^2_x$ and $\Omega$. While the latter parameters can typically be consistently estimated under mild assumptions, the local parameter $b$ cannot. We use the common decision-theoretic principles of average and worst-case risk, following \citet{ferguson_1967}, to account for the unknown parameter $b$ in the approximate excess prediction risk. If $\pi$ is a prior distribution on $b$ such that $\mathbb E_\pi[b^\top b]<\infty$ and $\mathbb E_\pi[b^\top b]>0$, Theorem \ref{theorem: selection of lambda} shows that the $\pi$-optimal selection of the regularization parameter $\lambda_n$ in ridge regression is
\begin{equation} \label{eqn: isotropic-pi-optimal introduction}
\widehat{\lambda}_{n,\pi}^* =n \cdot \frac{\operatorname{trace}(\widehat{\Omega})}{\widehat{\sigma}_x^2 \mathbb E_\pi[b^\top b]}.
\end{equation}
When the localization parameter $b$ admits a known bound $B$ on its norm ($\|b\| \leq B$), the selection of the regularization parameter $\lambda_n$ that minimizes approximate worst-case excess risk (and that we term the \emph{minimax} choice of $\lambda_n$) is:
\begin{equation} \label{eqn: isotropic-minimax-optimal introduction}
\widehat{\lambda}_{n, \mathrm{minimax}}^*=n \cdot \frac{\operatorname{trace}(\widehat{\Omega})}{\widehat{\sigma}_x^2 B^2}.
\end{equation} 

While the formulae in \eqref{eqn: isotropic-pi-optimal introduction}-\eqref{eqn: isotropic-minimax-optimal introduction} pertain to the case of isotropic features, the idea of optimizing approximate excess prediction risk (either by using an average or worst-case criterion) is more general. We explain how to operationalize both of these approaches for nonisotropic features; see Sections \ref{subsec:minimax} and \ref{subsec:average}. In both cases, the optimal choice of regularization parameter is the solution of a nonconvex optimization problem over the positive part of the real line. In this case, we suggest choosing $\lambda_n$ by evaluating the objective functions on a grid of candidate regularization parameters. This approach is the same as the one used by conventional statistical packages to find and implement the usual cross-validated choice of regularization parameter.  

{\scshape Related Literature:} 
The introduction of the ridge estimator to regression analysis is often credited to \cite{hoerl1970ridge}, although the estimator is based on the earlier work of \cite{horel1962application}; see \cite{hoerl2020ridge}. The textbook version of the ridge regression estimator uses a reference vector $\beta_0 = 0_{k \times 1}$; see, for example, \citet[Equation~3.41, p.63]{ESL2017} or \citet[Chapter 29.5]{hansen2022econometrics}. The more general formulation given in \eqref{eqn: ridge estimator formula introduction} has several precedents in the literature; for example, \citet[Equation 2.1]{swindel1976good}. See also \cite{anup1984ridge}.  

The high-level assumptions in \eqref{eqn:assumptions_introduction} that we use to analyze the distribution of the ridge estimator are common in the econometrics literature.  For example, it is well known that \eqref{eqn:assumptions_introduction} can be verified under i.i.d. sampling and standard primitive conditions on the joint distribution of $x_i$ and $\epsilon_i$; see \citet[Assumption~7.2 and Theorems~7.1 and~7.2]{hansen2022econometrics}. However, our framework accommodates richer data structures, allowing departures from i.i.d. sampling as well as general forms of heteroskedasticity and autocorrelation in $(x^{\top}_i,\epsilon_i)$.

The approximation we propose in \eqref{eqn: asy distribution ridge introduction } has not---to the best of our knowledge---appeared before in the literature expressed at this level of generality. The closest reference that we are aware of is the seminal work of \cite{knight2000asymptotics}, who present a Gaussian approximation to the distribution of the ridge regression estimator assuming independence between $x_i$ and $\epsilon_i$ (and independence across observations). The distribution in \eqref{eqn: asy distribution ridge introduction } coincides with their result when specialized to the case in which $\Omega$ is conditionally homoskedastic (that is, $\Omega = \mathbb{E}[\epsilon_i^2] \Sigma$) and $\beta_0=0_{k \times 1}$. We view the mean and variance in the approximation in \eqref{eqn: asy distribution ridge introduction } as providing simple generalizations of common formulae for the bias and variance of the ridge estimator derived conditional on covariates and under i.i.d. sampling; see, for example, \citet[Equations~29.8 and~29.9]{hansen2022econometrics} and also \citet[Section~4]{hoerl1970ridge}. 

The idea of considering drifting sequences of parameter values---as those in \eqref{eqn: local-to-beta0} and \eqref{eqn: ridge regularization limit}---in order to improve the distributional approximations provided by standard asymptotic theory has a long history in econometrics. See \cite{powell2017identification} for examples and details on the use of nonstandard asymptotics in econometrics. More concretely, the idea of using nonstandard asymptotics to analyze the distribution of the ridge regression estimator also appears in \cite{knight2000asymptotics} (they refer to it as \emph{triangular array asymptotics}), and more recently in the work of \cite{shen2025weak}, who assess the predictive performance of several machine learning methods in high-dimensional regressions with low signal-to-noise ratios. Drifting sequences of parameter values are also common in the analysis of shrinkage estimators; see, for example, \citet[Equation~14,~p.~119]{hansen2016efficient}. Local asymptotics are also used in the work of \cite{hirano2017forecasting} to compare the performance of different forecasting procedures under model uncertainty.

There is also a recent literature analyzing the risk of predictions based on the ridge regression estimator, where the outcome variable predictions take the same form as in \eqref{eqn: predictions ridge}. For example, \cite{hsu2012random}, \cite{dobriban2018high}, \cite{hastie2022surprises}, \cite{mourtada2022elementary}, \cite{atanasov2024risk}. All these papers make distributional assumptions that preclude data generating processes that exhibit at least one of the following features: conditional heteroskedasticity, time-series autocorrelation, or a nonlinear conditional expectation function. It is important to mention, however, that the additional assumptions in these papers allow them to consider \emph{high-dimensional} approximations where the number of covariates can be large relative to the sample size. In contrast, all the analysis in our paper pertains to a \emph{low-dimensional} model where we treat the number of covariates as fixed, and let the sample size diverge to infinity.

Finally, the idea of using an approximation of the risk function to choose regularization parameters has a long history in statistics and econometrics; see, for example, the discussion in \citet[p.~744]{abadie2019choosing}. To the best of our knowledge the formulae we provide in \eqref{eqn: isotropic-pi-optimal introduction} and \eqref{eqn: isotropic-minimax-optimal introduction} are new to the literature. It is important to mention, however, that one of the initial motivations of our paper was to recover an expression analogous to the optimal regularization parameter for ridge regression given by \cite{hastie2022surprises} in their Corollary 6 to Theorem 6. Their results show that using high-dimensional asymptotics where $k/n \rightarrow \gamma$, the oracle choice of regularization parameter for the ridge estimator (assuming $x_i$ and $\epsilon_i$ are independent and that $\Sigma = \mathbb{I}_k$) takes the form $\lambda^* = n \mathbb{E}[\epsilon_i^2]\gamma/\| \beta \|^2$. Assuming conditional homoskedasticity, our formulae in \eqref{eqn: isotropic-pi-optimal introduction} and \eqref{eqn: isotropic-minimax-optimal introduction} become analogous to \cite{hastie2022surprises}'s formula, but the true unknown $\|\beta_n\|^2$ is replaced by either the prior mean of $\| b \|^2$ or the maximum value of $\|b\|^2$. More generally, we note that while ridge regression and lasso-type estimators with cross-validated tuning parameters are routinely used with data that exhibit heteroskedasticity and autocorrelation, it is not entirely clear what are the theoretical guarantees of cross-validation in these environments; see, for example, the discussion in \cite{kock2026data}.

{\scshape Outline:} The rest of this paper is organized as follows. Section \ref{sec:notation} introduces the notation, framework, and main assumptions. Section \ref{sec:main results} presents our main results. Section \ref{sec:simulations} presents simulation evidence to illustrate and support our main results. Section \ref{sec:conclusion} concludes.

\section{Notation and Framework} \label{sec:notation} 

An econometrician has access to a dataset $D_n \equiv \{(y_i,x^{\top}_i)\}_{i=1}^{n}$ consisting of $n$ observations of a real-valued outcome variable, $y_i \in \mathbb{R}$, and a vector of $k$ covariates, $x_i \in \mathbb{R}^{k}$. The dataset $D_n$ is assumed to have been generated by the statistical model
\begin{equation} \label{eqn:statistical_model} 
y_i = x^{\top}_{i}\beta_n + \epsilon_i, \quad \{ (x^\top_i, \epsilon_i) \}_{i=1}^{n} \sim \mathbb{P}_n. 
\end{equation}
The parameters of this statistical model are i) the unknown vector of slope coefficients, $\beta_n \in \mathbb{R}^{k}$, and ii) the unknown distribution of the tuple $((x^{\top}_1,\epsilon_1)$, \ldots, $(x^{\top}_n,\epsilon_n))$, which we denote as $\mathbb{P}_n$. Note that we have indexed both $\beta_n$ and $\mathbb{P}_n$ by the sample size $n$. While this does not make a difference when conducting finite-sample analysis (since, in that case, $n$ is fixed), it will allow us for more flexibility when considering asymptotic approximations to the distribution of different statistics. 

We implicitly restrict the parameter space of the statistical model in \eqref{eqn:statistical_model}  by requiring $\mathbb{P}_n$ to satisfy the following high-level assumption: 

\begin{assumption} [High-level assumptions on $\mathbb{P}_n$] \label{asn: assumption high-level OLS ridge}
The distribution $\mathbb{P}_n$ satisfies
\begin{enumerate}    
\item  $(1/n) \sum_{i=1}^{n} x_{i} x_{i}^{\top} \overset{p}{\rightarrow} \Sigma$ for a positive definite matrix $\Sigma$. 

\item $(1/\sqrt{n}) \sum_{i=1}^{n} x_{i} \epsilon_i \overset{d}{\rightarrow} \mathcal{N}_{k}(\mathbf{0},\Omega)$ for a positive definite matrix $\Omega$.

\end{enumerate}
\end{assumption}
The high-level Assumption \ref{asn: assumption high-level OLS ridge} is a convenient way to consider a large class of data generating processes in our analysis. It is well known that Assumption \ref{asn: assumption high-level OLS ridge} can be verified under i.i.d. sampling and standard primitive conditions on the joint distribution of $x_i$ and $\epsilon_i$; see \citet[Assumption 7.2 and Theorems 7.1 and 7.2]{hansen2022econometrics}. However, our framework accommodates richer data structures, allowing departures from i.i.d. sampling as well as general forms of heteroskedasticity and autocorrelation in $(x^{\top}_i,\epsilon_i)$. 

The continuous mapping theorem and Slutsky's theorem immediately imply the asymptotic normality of the least-squares estimator of $\beta_n$ in the model \eqref{eqn:statistical_model}; namely 
\begin{equation} \label{eqn: ols distribution}
\sqrt{n} \left( \widehat{\beta}_{\textrm{OLS}} - \beta_n \right) \overset{d}{\rightarrow} \mathcal{N}_{k} \left( \mathbf{0}, \Sigma^{-1} \Omega \Sigma^{-1} \right), \quad \widehat{\beta}_{\textrm{OLS}} \equiv \left( \sum_{i=1}^{n} x_i x_i^{\top} \right)^{-1} \sum_{i=1}^{n} x_i y_i.  
\end{equation}
This slightly generalizes the normal approximation to the distribution of the least-squares estimator derived under i.i.d. sampling; see, for example, \citet[Theorem~7.3]{hansen2022econometrics}. In our framework, $\Omega$ can be the long-run variance of the process $\{x_i\epsilon_i\}_{i=1}^{\infty}$, and $\beta_n$ can vary with the sample size.  

\section{Main Results} \label{sec:main results}

\subsection{Asymptotic Distribution of the Ridge Estimator}

Define the \emph{ridge estimator}---with regularization parameter $\lambda \in \mathbb{R}_{+}$ and a reference vector $\beta_0 \in \mathbb{R}^{k}$---to be the solution of the following minimization problem:
\begin{equation} \label{eqn: ridge minimization} 
\min_{\beta \in \mathbb{R}^k} \frac{1}{n}\sum_{i=1}^n (y_i-x_i^{\top}\beta)^2 + \frac{\lambda}{n}\sum_{j=1}^k (\beta_j-\beta_{0j})^2.
\end{equation}
The solution to the minimization problem in \eqref{eqn: ridge minimization} can be shown to equal
\begin{equation} \label{eqn: ridge estimator formula}
\widehat{\beta}_{\lambda} \equiv \left( \sum_{i=1}^{n} x_i x_i^{\top} + \lambda \mathbb{I}_k \right)^{-1} \left( \sum_{i=1}^{n} x_i y_i + \lambda \beta_0\right).
\end{equation}
It is known that the objective function that defines the ridge estimator intentionally sacrifices \emph{training error} (since predictors of $y_i$ based on $\widehat{\beta}_{\lambda}$ will have a larger training error than predictions based on the least-squares estimator $\widehat{\beta}_{\textrm{OLS}}$) by penalizing deviations away from the reference vector $\beta_0$. 

We note that in most textbook definitions of the ridge estimator, the reference vector $\beta_0$ equals a vector of zeros of dimension $k \times 1$; see, for example, \citet[Equation~3.41, p.63]{ESL2017} or \citet[Chapter 29.5]{hansen2022econometrics}. Since the usual interpretation of \eqref{eqn: ridge estimator formula} is that the ridge estimator ``shrinks'' the least-squares coefficients, we allow for the possibility that this shrinkage is towards an arbitrary reference vector $\beta_0$ that could be different from zero. We note that this more general formulation has several precedents in the literature; for example, \citet[Equation (2.1)]{swindel1976good}. See also \cite{anup1984ridge}.

As we mentioned in the introduction, the main goal of this paper is to present a useful approximation to the finite-sample distribution of the ridge estimator in \eqref{eqn: ridge estimator formula}. We are particularly interested in deriving an approximation that captures the well-known fact that, in finite samples, the ridge estimator presents a bias-variance trade-off. Textbook expressions of the bias and variance of the ridge estimator in finite samples are usually derived assuming i.i.d. sampling and conditional mean independence; i.e., $\mathbb{E}_{\mathbb{P}_n}[\epsilon_i | x_i] = \mathbb{E}_{\mathbb{P}_n}[\epsilon_i]=0$. See, for example, \citet[Chapter 29.6]{hansen2022econometrics}. This stands in contrast with the generality under which the asymptotic distribution of the least-squares estimator can be derived; see Equation \eqref{eqn: ols distribution} above.  

Our first result shows that when the regularization parameter $\lambda$ is \emph{negligible relative to the sample size} (even if it is data dependent), the standard asymptotic distribution of the ridge estimator will not be useful to capture the bias-variance trade-off that is present in finite samples. More precisely, the following result shows that the asymptotic distribution of the ridge estimator will be asymptotically equivalent to that of the least-squares estimator. In order to state our result (and to allow for the possibility that the regularization parameter $\lambda$ could be selected in a data-driven manner), we consider a possibly stochastic sequence of regularization parameters $\{\lambda_n\}_{n=1}^{\infty}$.  

\begin{proposition}[Approximation to the distribution of $\widehat{\beta}_{\lambda_n}$ when $\lambda_n/n$ is negligible]  \label{proposition: small lambda_n approximation}  
Suppose that the dataset $D_n$ was generated according to the statistical model \eqref{eqn:statistical_model}, and suppose that $\mathbb{P}_n$  satisfies Assumption \ref{asn: assumption high-level OLS ridge}. If 
\[ \lambda_{n}/n \overset{p}{\rightarrow}0 \: \textrm{ and } \: (\lambda_n/n) \sqrt{n}(\beta_n-\beta_0) \overset{p}{\rightarrow}0,  \]
then
\[ \sqrt{n} \left( \widehat{\beta}_{\lambda_n} - \beta_n \right) -  \sqrt{n} \left( \widehat{\beta}_{\textrm{OLS}} - \beta_n \right) \overset{p}{\rightarrow} 0.\]
\end{proposition}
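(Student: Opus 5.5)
The plan is to write $\widehat{\beta}_{\lambda_n}-\beta_n$ in closed form and compare it term by term with the least-squares error. Define $\widehat{\Sigma}_n \equiv (1/n)\sum_{i=1}^n x_i x_i^{\top}$ and $S_n \equiv (1/\sqrt{n})\sum_{i=1}^n x_i\epsilon_i$. Substituting $y_i = x_i^{\top}\beta_n+\epsilon_i$ into \eqref{eqn: ridge estimator formula} gives
\[
\sum_{i=1}^n x_i y_i + \lambda_n\beta_0 = \Bigl(\sum_{i=1}^n x_ix_i^{\top}+\lambda_n\mathbb{I}_k\Bigr)\beta_n + \sum_{i=1}^n x_i\epsilon_i - \lambda_n(\beta_n-\beta_0).
\]
Dividing by $n$ and multiplying by $\sqrt{n}$ then yields the exact identity
\[
\sqrt{n}\bigl(\widehat{\beta}_{\lambda_n}-\beta_n\bigr) = \bigl(\widehat{\Sigma}_n+(\lambda_n/n)\mathbb{I}_k\bigr)^{-1}\Bigl[S_n - (\lambda_n/n)\sqrt{n}(\beta_n-\beta_0)\Bigr].
\]
This identity holds whenever the inverse exists, which is always the case for $\lambda_n>0$. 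In the same way, $\sqrt{n}(\widehat{\beta}_{\textrm{OLS}}-\beta_n)=\widehat{\Sigma}_n^{-1}S_n$ on the event that $\widehat{\Sigma}_n$ is invertible.

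The difference of interest therefore splits into two pieces:
\[
\Bigl[\bigl(\widehat{\Sigma}_n+(\lambda_n/n)\mathbb{I}_k\bigr)^{-1}-\widehat{\Sigma}_n^{-1}\Bigr]S_n \;-\; \bigl(\widehat{\Sigma}_n+(\lambda_n/n)\mathbb{I}_k\bigr)^{-1}(\lambda_n/n)\sqrt{n}(\beta_n-\beta_0).
\]
For the first piece, Assumption \ref{asn: assumption high-level OLS ridge}(1) together with $\lambda_n/n\overset{p}{\rightarrow}0$ gives $\widehat{\Sigma}_n+(\lambda_n/n)\mathbb{I}_k\overset{p}{\rightarrow}\Sigma$ and $\widehat{\Sigma}_n\overset{p}{\rightarrow}\Sigma$. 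Since $\Sigma$ is positive definite, matrix inversion is continuous at $\Sigma$, so by the continuous mapping theorem both inverses converge in probability to $\Sigma^{-1}$, and their difference is $o_p(1)$. By Assumption \ref{asn: assumption high-level OLS ridge}(2), $S_n$ converges in distribution and is therefore $O_p(1)$. The product of an $o_p(1)$ matrix and an $O_p(1)$ vector is $o_p(1)$ (equivalently, apply Slutsky's theorem), so the first piece vanishes. For the second piece, the inverse matrix is $O_p(1)$ because it converges to $\Sigma^{-1}$, and $(\lambda_n/n)\sqrt{n}(\beta_n-\beta_0)\overset{p}{\rightarrow}0$ by hypothesis. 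Their product is $o_p(1)$.

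The only delicate point is that the inverses may fail to exist. When $\lambda_n=0$ with positive probability, or in finite samples where $\widehat{\Sigma}_n$ is singular, the closed-form expressions are not defined. I would handle this by working on the event $E_n=\{\lambda_{\min}(\widehat{\Sigma}_n)>\lambda_{\min}(\Sigma)/2\}$. Since $\lambda_n\ge 0$, the matrix $\widehat{\Sigma}_n+(\lambda_n/n)\mathbb{I}_k$ is then also invertible on $E_n$. Because eigenvalues are continuous and $\widehat{\Sigma}_n\overset{p}{\rightarrow}\Sigma$, we have $\mathbb{P}_n(E_n)\to 1$. A statement of convergence in probability is unaffected by what happens on events whose probability tends to zero, so establishing the decomposition and the bounds on $E_n$ is enough. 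This bookkeeping step is the main thing to handle carefully; the remaining steps are direct applications of the continuous mapping theorem and Slutsky's theorem.
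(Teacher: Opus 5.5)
Your proposal is correct and follows the paper's proof essentially line by line. It uses the same exact identity for $\sqrt{n}(\widehat{\beta}_{\lambda_n}-\beta_n)$, the same two-term decomposition of the difference, and the same $o_p(1)\cdot O_p(1)$ and Slutsky arguments, plus an explicit treatment of the invertibility event $E_n$ that the paper leaves implicit.
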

\begin{proof}
See Appendix \ref{subsec: small lambda_n approximation}.
\end{proof} 
The intuition behind this result follows directly from the expression for $\widehat{\beta}_{\lambda_n}$ in \eqref{eqn: ridge estimator formula}. When $n$ is large and $\lambda_n/n$ is close to zero, then $\widehat{\beta}_{\lambda_n}$ is approximately the same as $\widehat{\beta}_{\textrm{OLS}}$. This is why the distribution of $\sqrt{n}(\widehat{\beta}_{\lambda_n} - \beta_n)$ is approximately the same as the distribution of the least-squares estimator. 

The proof of Proposition \ref{proposition: small lambda_n approximation} shows that if $\lambda_n$ is not negligible relative to the sample size, then the asymptotic distribution of the ridge estimator will differ from that of the least-squares estimator. Moreover, the asymptotic distribution will capture the bias-variance trade-off that the ridge estimator faces in finite samples. 

\begin{theorem}[Approximation to the distribution of $\widehat{\beta}_{\lambda_n}$ when $\lambda_n/n$ is potentially non-negligible] \label{theorem: big lambda_n approximation}
Suppose that the dataset $D_n$ was generated according to the statistical model \eqref{eqn:statistical_model}, and suppose that $\mathbb{P}_n$  satisfies Assumption \ref{asn: assumption high-level OLS ridge}. If 
\begin{equation} \label{eqn: key condition Theorem 1}
\lambda_{n}/n \overset{p}{\rightarrow} \lambda \in [0,\infty) \: \textrm{ and } \: \sqrt{n}(\beta_n-\beta_0) \rightarrow b \in \mathbb{R}^{k},  
\end{equation}
then
\begin{equation} \label{eqn: asy distribution ridge }
\sqrt{n} \left( \widehat{\beta}_{\lambda_n} -\beta_n \right) \overset{d}{\rightarrow} \mathcal{N}_{k} \left( -\lambda (\Sigma + \lambda \mathbb{I}_k)^{-1} b , (\Sigma + \lambda \mathbb{I}_k)^{-1} \Omega (\Sigma + \lambda \mathbb{I}_k)^{-1} \right) .
\end{equation}
    
\end{theorem}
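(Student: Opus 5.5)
The plan is to derive an exact finite-sample decomposition of $\sqrt{n}(\widehat{\beta}_{\lambda_n}-\beta_n)$ and then pass to the limit term by term using Assumption \ref{asn: assumption high-level OLS ridge}, the continuous mapping theorem, and Slutsky's theorem. Write $\widehat{\Sigma}_n \equiv (1/n)\sum_{i=1}^n x_i x_i^\top$. Substituting $y_i = x_i^\top \beta_n + \epsilon_i$ into \eqref{eqn: ridge estimator formula}, and using $\sum_i x_i x_i^\top \beta_n + \lambda_n \beta_0 = (\sum_i x_i x_i^\top + \lambda_n \mathbb{I}_k)\beta_n - \lambda_n(\beta_n - \beta_0)$, gives the identity
\begin{equation*}
\sqrt{n}\left(\widehat{\beta}_{\lambda_n}-\beta_n\right) = \left(\widehat{\Sigma}_n + (\lambda_n/n)\mathbb{I}_k\right)^{-1}\left[ \frac{1}{\sqrt{n}}\sum_{i=1}^n x_i\epsilon_i - (\lambda_n/n)\sqrt{n}(\beta_n-\beta_0)\right].
\end{equation*}
This holds whenever $\widehat{\Sigma}_n + (\lambda_n/n)\mathbb{I}_k$ is invertible. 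That is guaranteed if $\lambda_n>0$, or, with probability approaching one, if $\widehat{\Sigma}_n$ is positive definite; the latter follows from part 1 of Assumption \ref{asn: assumption high-level OLS ridge} and continuity of eigenvalues.

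Next I handle the matrix factor. Part 1 of Assumption \ref{asn: assumption high-level OLS ridge} and $\lambda_n/n \overset{p}{\rightarrow}\lambda$ give $\widehat{\Sigma}_n + (\lambda_n/n)\mathbb{I}_k \overset{p}{\rightarrow} \Sigma + \lambda\mathbb{I}_k$. The limit is positive definite because $\Sigma$ is and $\lambda\ge 0$. Matrix inversion is continuous at invertible matrices, so the continuous mapping theorem yields $(\widehat{\Sigma}_n + (\lambda_n/n)\mathbb{I}_k)^{-1}\overset{p}{\rightarrow}(\Sigma+\lambda\mathbb{I}_k)^{-1}$. The events where the inverse does not exist have vanishing probability, and I would define the estimator arbitrarily on them without affecting convergence in distribution.

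For the bracketed vector, part 2 of the assumption gives $(1/\sqrt{n})\sum_i x_i\epsilon_i \overset{d}{\rightarrow} Z\sim\mathcal{N}_k(\mathbf{0},\Omega)$. The deterministic sequence $\sqrt{n}(\beta_n-\beta_0)\to b$ combined with $\lambda_n/n\overset{p}{\rightarrow}\lambda$ gives $(\lambda_n/n)\sqrt{n}(\beta_n-\beta_0)\overset{p}{\rightarrow}\lambda b$, a constant. Slutsky's theorem then gives joint convergence of the inverse matrix and the bracket to $((\Sigma+\lambda\mathbb{I}_k)^{-1}, Z-\lambda b)$, since all components other than $Z$ converge to constants. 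Applying the continuous mapping theorem to the product yields
\begin{equation*}
\sqrt{n}\left(\widehat{\beta}_{\lambda_n}-\beta_n\right)\overset{d}{\rightarrow}(\Sigma+\lambda\mathbb{I}_k)^{-1}(Z-\lambda b).
\end{equation*}
This is a linear transformation of a Gaussian vector, with mean $-\lambda(\Sigma+\lambda\mathbb{I}_k)^{-1}b$ and variance $(\Sigma+\lambda\mathbb{I}_k)^{-1}\Omega(\Sigma+\lambda\mathbb{I}_k)^{-1}$, where I use symmetry of $\Sigma+\lambda\mathbb{I}_k$. That is exactly \eqref{eqn: asy distribution ridge }.

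The argument is essentially routine. The only point needing care is that $\lambda_n$ is data dependent and may be correlated with the score $(1/\sqrt{n})\sum_i x_i\epsilon_i$. This causes no difficulty because $\lambda_n/n$ converges in probability to a constant, which is precisely the case where Slutsky's theorem delivers joint convergence without any independence. The other minor point is the invertibility issue when $\lambda=0$, and it is handled by the with-probability-approaching-one argument above.
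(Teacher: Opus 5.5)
Your proposal is correct and follows essentially the same route as the paper: the paper's proof uses the same sampling-error identity, $\widehat{A}\overset{p}{\rightarrow}\Sigma+\lambda\mathbb{I}_k$, convergence of the bias term to $(\Sigma+\lambda\mathbb{I}_k)^{-1}\lambda b$, and Slutsky's theorem. Your treatment of invertibility with probability approaching one is more careful than the paper's, which leaves that point implicit.
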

    
\begin{proof}
See Appendix \ref{subsec: big lambda_n approximation}.
\end{proof} 

We view the mean and variance in the limiting distribution in \eqref{eqn: asy distribution ridge } as providing simple generalizations of common formulae for the bias and variance of the ridge estimator derived conditional on covariates and under i.i.d. sampling; see, for example, \citet[Equations (29.8) and (29.9)]{hansen2022econometrics}. 

The key assumptions of Theorem \ref{theorem: big lambda_n approximation} are the two conditions in \eqref{eqn: key condition Theorem 1}. The first condition requires the regularization parameter to be potentially non-negligible relative to the sample size. Note that we allow for the possibility that the regularization parameter is selected in a data-driven way, as long as it has a deterministic probability limit. We note that the idea of using sequences of regularization parameters $\lambda_n$ to analyze the asymptotic properties of penalized estimators is not ours and has several precedents in the statistics literature; see, for example, \citet[Theorems 2 and 3]{knight2000asymptotics} where different rate conditions on $\lambda_n$ are used to analyze the ridge estimator and more general lasso-type estimators.  

The second condition in \eqref{eqn: key condition Theorem 1} can be interpreted as saying that the true regression coefficient $\beta_n$ is ``local-to-$\beta_0$''. This means that we are assuming that the reference vector used to compute the ridge estimator in \eqref{eqn: ridge estimator formula} is not too far from the true (and unknown) $\beta_n$. We note that if $\sqrt{n}(\beta_n-\beta_0)$ is unbounded, then any non-negligible regularization parameter $\lambda_n$ leads to a bias in $\sqrt{n} \left( \widehat{\beta}_{\lambda_n} -\beta_n \right)$ that becomes arbitrarily large.

The idea of considering drifting sequences of parameter values in order to improve the distributional approximations provided by standard asymptotic theory has a long history in econometrics. See, for example, the local-to-zero asymptotics in the linear instrumental variables model of \cite{staiger1997instrumental}; the local-to-unit-root asymptotics in the study of nearly integrated autoregressive processes in \cite{phillips1988regression} and its recent generalization in \cite{dou2021generalized}; the local-to-identification-failure analysis in the study of nonlinear Generalized Method of Moments models with weak identification in \cite{andrews2022optimal}; and the growing-number-of-folds asymptotic framework in \cite{velez2024asymptotic}, which improves distributional approximations for Debiased Machine Learning estimators. See \cite{powell2017identification} for more examples and details on the use of nonstandard asymptotics in econometrics. Some recent work more directly related to our set-up is the paper of \cite{shen2025weak}, studying a high-dimensional linear regression model with \emph{weak} signals. While our work focuses on regression models where the dimension of the covariates is fixed, a version of the weak signals model in \cite{shen2025weak} can be obtained in the case in which the reference vector equals zero and the true parameter $\beta_n$ drifts towards zero at rate $1/\sqrt{n}$.

Although Theorem \ref{theorem: big lambda_n approximation} follows from elementary asymptotic theory, to our knowledge, we are not aware of any previous work that formally derives the simple approximation in \eqref{eqn: asy distribution ridge } at the level of generality allowed by Assumption \ref{asn: assumption high-level OLS ridge}. Textbook results for the bias and variance of the ridge estimator (usually derived conditional on covariates and under i.i.d. sampling of a linear regression model) can yield a Gaussian distribution similar to \eqref{eqn: asy distribution ridge } in finite samples if the error term in the regression model is also assumed to be Gaussian (conditional on the vector of covariates). But the analysis is more nuanced if the error distribution is non-Gaussian and/or there is serial autocorrelation in the regression residuals.

The closest result to \eqref{eqn: asy distribution ridge } that we have found in the literature appears in the  seminal paper by \cite{knight2000asymptotics}, which provides asymptotic results for lasso-type estimators (including the ridge estimator). On p. 1368 they use their Theorem 4 to derive an approximation to the asymptotic distribution of a ridge estimator with reference vector $\beta_0 = 0_{k \times 1}$. Under i.i.d. sampling and a conditional homoskedasticity assumption (i.e., $\Omega = \mathbb{E}[x_ix_i^{\top} \epsilon_i^2] = \mathbb{E}[x_ix_i^{\top}] \mathbb{E}[\epsilon_i^2]$), they obtain the same formula as in \eqref{eqn: asy distribution ridge } but specialized to the case in which $\Omega = \mathbb{E}[\epsilon_i^2] \Sigma $. Thus, our results can be viewed as a generalization of their analysis of the asymptotic distribution of the ridge estimator, but allowing for more general data structures (where heteroskedasticity and serial autocorrelation are permitted and captured by the matrix $\Omega$), and where we also allow for a general reference vector $\beta_0$. 
\cite{knight2000asymptotics} also consider an asymptotic sequence where $\beta_n$ changes with the sample size (they refer to this as \emph{triangular array asymptotics}). We note that a potentially interesting extension of our results is to derive asymptotic approximations for lasso-type estimators (analogous to part b) of their Theorem 4) under our Assumption \ref{asn: assumption high-level OLS ridge}. These results could then be used to provide concrete recommendations for the choice of regularization parameter.  

In the remaining part of the paper, we maintain our focus on the ridge estimator and use the asymptotic approximation in Theorem \ref{theorem: big lambda_n approximation} to present approximations to the \emph{prediction risk} of the ridge estimator, and we use statistical decision theory to provide recommendations on the choice of regularization parameter.  

\subsection{An Approximation to the Prediction Risk of Ridge Regression}  

In this section we use Theorem \ref{theorem: big lambda_n approximation} to provide an approximation to the \emph{prediction risk} of ridge regression. We start by providing a definition of the prediction problem we are interested in, and we then present a formal definition of the risk of predictions based on ridge regression. 

\emph{Prediction Problem.} An econometrician has access to a dataset $D_n \equiv \{(y_i,x^{\top}_i)\}_{i=1}^{n}$ consisting of $n$ observations of a real-valued outcome variable, $y_i \in \mathbb{R}$, and a vector of $k$ covariates, $x_i \in \mathbb{R}^{k}$. The dataset $D_n$ is assumed to have been generated by the statistical model \eqref{eqn:statistical_model} with parameters $(\beta_n, \mathbb{P}_n)$ that satisfy Assumption \ref{asn: assumption high-level OLS ridge}. A prediction function is a mapping $a:\mathbb{R}^{k} \rightarrow \mathbb{R}$ that maps covariates into a predicted value for the outcome variable. We note that a prediction function is defined for all possible values of covariates, including those that have not been observed in the sample. In order to define a loss function for the prediction problem, we introduce an additional assumption. 
\begin{assumption}\label{asn: stationary-prediction-law}
The stochastic process $\{(x^{\top}_i,\epsilon_i)\}_{i=1}^{\infty}$ is strictly stationary in the sense of Definition 1.3.3 in \cite{Brockwell_Davis:2013}. 
\end{assumption} 

The strict stationarity assumption means that---associated to $\mathbb{P}_n$---there exists a probability distribution $\mathbb P$ over $\mathbb{R}^{k+1}$ such that, for every $i=1,\dots,n,$ we have $(x_i^\top,\epsilon_i) \overset{\mathbb{P}_n}{\sim} \mathbb P$. We will refer to $\mathbb{P}$ as the \emph{stationary distribution} of $\mathbb{P}_n$. We use this stationary distribution to define a loss function for the prediction problem as follows. Suppose we sample a new pair $(x^{\top},\epsilon)$ according to the stationary distribution $\mathbb{P}$, and use the slope coefficient $\beta_n$ to construct a new outcome-covariate pair $(y,x)=(x^{\top}\beta_n+\epsilon,x )$.\footnote{Note then that in our analysis we do not allow the distribution used to evaluate prediction error to deviate arbitrarily from the distribution that generated the data. As shown by \cite{patil2024optimal}, such an assumption can have important implications over the optimal choice of regularization parameter. A framework for analyzing the distributionally robust prediction error for the square-root lasso and related estimators (and for optimally selecting tuning parameters) has been proposed in \cite{montiel2026distributionally}, but their framework excludes the ridge regression estimator.} We can then define the loss associated to a prediction function $a:\mathbb{R}^{k} \rightarrow \mathbb{R}$ as 
\[ L(a,\beta_n,\mathbb{P}) \equiv \mathbb{E}_{(y,x) \sim (\beta_n,\mathbb{P})} [ (y-a(x))^2 ].\]
In a slight abuse of notation, $(y,x) \sim (\beta_n,\mathbb{P})$ is used to capture the fact that the expectation is taken over outcome-covariate pairs generated using $(\beta_n, \mathbb{P})$ as described above. 

\emph{Prediction Risk.} In a prediction problem, the goal is to construct a data-driven prediction function; that is, we want to use the available training data to choose a prediction function. In a slight abuse of notation, we denote data-driven prediction functions as $\widehat{a}$. Formally, we think of $\widehat{a}$ as a mapping that takes the data set $D_n$ as input and returns a prediction function $\widehat{a}$ from covariates to outcomes. Because $\widehat{a}$ is constructed from the training data, it is a random function. The prediction risk is then the expected prediction loss 
\begin{equation} \label{eqn: prediction risk general}
R(\hat{a}; \beta_n, \mathbb{P}_n) = \mathbb{E}_{(\beta_n,\mathbb{P}_n)}[L (\widehat{a};\beta_n,\mathbb{P})],
\end{equation}
where $\mathbb{E}_{(\beta_n,\mathbb{P}_n)}$ means that the expectation is taken over the distribution of $D_n$, which is parameterized by $(\beta_n,\mathbb{P}_n)$.

\emph{Approximate Prediction Risk.} Consider the prediction function based on the ridge regression estimator
\begin{equation}
\widehat{a}_{\lambda_n}(x) = x^{\top} \widehat{\beta}_{\lambda_n},
\end{equation}
where $\widehat{\beta}_{\lambda_n}$ is defined in \eqref{eqn: ridge estimator formula} with $\lambda$ replaced by $\lambda_n$. 

The exact finite-sample analysis of the prediction risk of ridge regression is challenging even under the stylized homoskedastic, Gaussian regression model. However, there are different papers that provide approximations (and lower/upper bounds) to this prediction risk. For example, \cite{dobriban2018high}, \cite{hastie2022surprises}, \cite{mourtada2022elementary}, \cite{atanasov2024risk}. The following result shows that we can use Theorem \ref{theorem: big lambda_n approximation} to provide an approximation to the excess prediction risk of ridge regression as a function of three types of parameters: $\lambda$ (the probability limit of $\lambda_n/n$); $b$ (the parameter controlling the local-to-$\beta_0$ regression coefficient $\beta_n$); and the variance parameters $\Sigma$ and $\Omega$ defined in Assumption \ref{asn: assumption high-level OLS ridge}.  
More precisely, define the \emph{approximate excess risk function} 
\begin{equation}\label{eqn: approximation to the prediction risk of ridge}
R^{e}_n( \lambda; b,\Sigma,\Omega ) \equiv \frac{1}{n}\lambda^2 b^{\top}  (\Sigma + \lambda \mathbb{I}_k)^{-1} \Sigma  (\Sigma + \lambda \mathbb{I}_k)^{-1} b + \frac{1}{n} \textrm{trace} \left(  (\Sigma + \lambda \mathbb{I}_k)^{-1} \Omega  (\Sigma + \lambda \mathbb{I}_k)^{-1} \Sigma \right). 
\end{equation}

\begin{theorem}[Approximation to the excess prediction risk of ridge regression] \label{theorem: approximation to the prediction risk of ridge}
Suppose that the dataset $D_n$ was generated according to the statistical model \eqref{eqn:statistical_model}, that \(\mathbb{P}_n\) satisfies Assumptions \ref{asn: assumption high-level OLS ridge} and \ref{asn: stationary-prediction-law}, and that the stationary distribution $\mathbb{P}$ satisfies $\mathbb{E}_{\mathbb{P}}[x_i \epsilon_i] = 0_{k \times 1}$ and $\mathbb{E}_{\mathbb{P}}[x_i x_i^{\top}] = \Sigma$. If
\begin{enumerate}
\item[(i)] $\lambda_n/n \overset{p}{\to} \lambda \in [0,\infty)$ and $\sqrt{n}(\beta_n-\beta_0)\to b\in\mathbb{R}^k$,
\item[(ii)] there exist estimators $(\widehat{\Sigma},\widehat{\Omega})$ that are consistent for $(\Sigma,\Omega)$,
\item[(iii)] there exists $\delta>0$ such that
\[\sup_n \mathbb{E}_{\mathbb{P}_n}\!\left[(Z_n^\top \Sigma Z_n)^{1+\delta}\right] < \infty, \qquad Z_n \equiv \sqrt{n}(\widehat{\beta}_{\lambda_n}-\beta_n), \]
\end{enumerate}
then
\begin{equation} \label{eqn:excess_risk_approximation}
R(\widehat{a}_{\lambda_n};\beta_n,\mathbb P_n)-\sigma^2 = R^e_n(\lambda;b,\widehat{\Sigma},\widehat{\Omega}) + o_{(\beta_n,\mathbb P_n)}(1/n),
\end{equation}
where $\sigma^2 \equiv \mathbb{E}_{\mathbb{P}}[\epsilon^2]$.
\end{theorem}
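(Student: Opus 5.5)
The plan is to reduce the prediction risk to a quadratic form in $Z_n = \sqrt{n}(\widehat{\beta}_{\lambda_n}-\beta_n)$ and then use Theorem \ref{theorem: big lambda_n approximation} together with uniform integrability to pass to the limit.

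\textbf{Step 1: decompose the conditional loss.} Fix the training data $D_n$. The new pair $(x,\epsilon)$ is drawn from $\mathbb{P}$ independently of the training data, which is how the loss $L$ is defined. Then
\[ L(\widehat a_{\lambda_n};\beta_n,\mathbb P) = \mathbb{E}_{\mathbb P}\big[(\epsilon - x^\top(\widehat\beta_{\lambda_n}-\beta_n))^2\big] = \sigma^2 - 2(\widehat\beta_{\lambda_n}-\beta_n)^\top \mathbb{E}_{\mathbb P}[x\epsilon] + (\widehat\beta_{\lambda_n}-\beta_n)^\top \Sigma (\widehat\beta_{\lambda_n}-\beta_n). \]
The assumptions $\mathbb{E}_{\mathbb{P}}[x\epsilon]=0$ and $\mathbb{E}_{\mathbb{P}}[xx^\top]=\Sigma$ kill the cross term and identify the quadratic weight. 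This gives
\[ L = \sigma^2 + \tfrac1n Z_n^\top\Sigma Z_n. \]
Taking expectations over $D_n$ yields the exact identity
\[ R(\widehat a_{\lambda_n};\beta_n,\mathbb P_n)-\sigma^2 = \tfrac1n\,\mathbb{E}_{\mathbb P_n}[Z_n^\top\Sigma Z_n]. \]

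\textbf{Step 2: compute the limit.} By Theorem \ref{theorem: big lambda_n approximation}, $Z_n \overset{d}{\to} Z \sim \mathcal N_k(\mu, V)$ with $\mu = -\lambda(\Sigma+\lambda\mathbb I_k)^{-1}b$ and $V=(\Sigma+\lambda\mathbb I_k)^{-1}\Omega(\Sigma+\lambda\mathbb I_k)^{-1}$. By the continuous mapping theorem, $Z_n^\top\Sigma Z_n \overset{d}{\to} Z^\top\Sigma Z$. Condition (iii) bounds the $(1+\delta)$ moments uniformly, so the sequence $Z_n^\top\Sigma Z_n$ is uniformly integrable. Convergence in distribution plus uniform integrability gives convergence of means, so
\[ \mathbb{E}[Z_n^\top\Sigma Z_n] \to \mathbb{E}[Z^\top\Sigma Z] = \mu^\top\Sigma\mu + \operatorname{trace}(V\Sigma). \]
Substituting $\mu$ and $V$, this equals $\lambda^2 b^\top(\Sigma+\lambda\mathbb I_k)^{-1}\Sigma(\Sigma+\lambda\mathbb I_k)^{-1}b + \operatorname{trace}\big((\Sigma+\lambda\mathbb I_k)^{-1}\Omega(\Sigma+\lambda\mathbb I_k)^{-1}\Sigma\big)$, which is $n R^e_n(\lambda;b,\Sigma,\Omega)$. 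Hence
\[ R-\sigma^2 = R^e_n(\lambda;b,\Sigma,\Omega) + o(1/n). \]

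\textbf{Step 3: replace $(\Sigma,\Omega)$ by their estimators.} The map $(\Sigma,\Omega)\mapsto nR^e_n(\lambda;b,\Sigma,\Omega)$ is continuous in a neighborhood of the true values, since $\Sigma+\lambda\mathbb I_k$ is positive definite and hence invertible there. By condition (ii) and the continuous mapping theorem,
\[ nR^e_n(\lambda;b,\widehat\Sigma,\widehat\Omega) - nR^e_n(\lambda;b,\Sigma,\Omega) = o_p(1), \]
so the plug-in differs from the population version by $o_{(\beta_n,\mathbb P_n)}(1/n)$. Combining with Step 2 gives \eqref{eqn:excess_risk_approximation}. The remainder there is random, because the left side is deterministic while $R^e_n$ evaluated at $(\widehat\Sigma,\widehat\Omega)$ is not; this is consistent with the notation $o_{(\beta_n,\mathbb P_n)}(1/n)$.

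\textbf{Main obstacle.} The key step is Step 2: upgrading convergence in distribution to convergence of expectations. Condition (iii) is exactly what is needed here. The point to check is that it gives uniform integrability of $Z_n^\top\Sigma Z_n$ itself, by the de la Vallée-Poussin criterion with $\phi(t)=t^{1+\delta}$, and that $\Sigma$ is the same fixed matrix throughout, with no dependence on $n$. Step 1 also quietly requires that the new draw be independent of $D_n$, which is built into the definition of $L$. Finally, the expectation $\mathbb{E}[Z^\top\Sigma Z]$ should be computed by the standard Gaussian quadratic-form identity, $\mathbb{E}[Z^\top A Z] = \mu^\top A\mu + \operatorname{trace}(AV)$.
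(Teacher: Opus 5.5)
Your proposal is correct and follows the paper's proof essentially step for step. It uses the same three ingredients: the exact identity $R(\widehat a_{\lambda_n};\beta_n,\mathbb P_n)-\sigma^2=\frac1n\mathbb E_{\mathbb P_n}[Z_n^\top\Sigma Z_n]$; convergence of this mean via Theorem \ref{theorem: big lambda_n approximation} together with the uniform integrability supplied by condition (iii); and a continuous-mapping argument to replace $(\Sigma,\Omega)$ by $(\widehat\Sigma,\widehat\Omega)$.
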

\begin{proof}
See Appendix \ref{subsec: approximation to Ridge prediction risk}
\end{proof}

The difference between the finite sample prediction risk of a given data-driven predictor $\widehat{a}$---which we have denoted by $R(\widehat{a}_n;\beta_n,\mathbb{P}_n)$---and the residual variance $\sigma^2$ is typically referred to as the \emph{excess prediction risk} or simply \emph{excess risk}.\footnote{See, for example, \cite{mourtada2022exact}.} Theorem \ref{theorem: approximation to the prediction risk of ridge} says that, in large samples, the excess prediction risk of ridge regression equals---up to a term that is small in probability---by the approximate excess prediction risk function defined in \eqref{eqn: approximation to the prediction risk of ridge},  evaluated at $(\lambda,b,\widehat{\Sigma},\widehat{\Omega})$. The result shows that if we fix $(\widehat{\Sigma},\widehat{\Omega})$, the excess risk is expected to vary as a function of i) how large is the penalty parameter relative to the sample size ($\lambda$ is the probability limit of $\lambda_n/n$) and ii) how close is the reference vector $\beta_0$ to the true coefficient $\beta_n$ ($b$ is the limit of $\sqrt{n}(\beta_n-\beta_0)$). 

We note that part ii) of the assumptions of Theorem \ref{theorem: approximation to the prediction risk of ridge} can easily be verified in low-dimensional environments, where $k$ is fixed and $n$ grows to infinity. However, it is well known that---even if conditional homoskedasticity holds---the consistent estimation of $\Omega = \sigma^2_{\epsilon} \Sigma$ is a nontrivial problem when $k$ grows proportionally to $n$; see, for example, \cite{liu2020estimation}. The estimation of $\Omega$ in high-dimensional problems where heteroskedasticity is present is even more challenging; see \cite{cattaneo2018inference} and the references therein.

Theorem \ref{theorem: approximation to the prediction risk of ridge} is conceptually related to an active area of research in the statistics literature providing asymptotic and nonasymptotic approximations to the excess prediction risk of the ridge regression estimator; see, for example, \cite{dobriban2018high},\cite{hastie2022surprises}, \cite{mourtada2022elementary} and the references therein. All these papers make distributional assumptions that preclude data generating processes that exhibit conditional heteroskedasticity, time-series autocorrelation, or a nonlinear conditional expectation function. The additional assumptions in these papers, however, allow them to consider \emph{high-dimensional} asymptotics where the number of covariates is allowed to grow relative to the sample size. In contrast, all the analysis in this paper pertains a \emph{low-dimensional} model where we treat the number of covariates as fixed, and let the sample size diverge to infinity.

\subsection{Selection of $\lambda$}  

In this section, we use the approximation to the excess prediction risk of ridge regression given in Theorem \ref{theorem: approximation to the prediction risk of ridge} to make a concrete recommendation regarding the selection of the regularization parameter $\lambda_n$. The key insight is to pick $\lambda$---which is the probability limit of $\lambda_n/n$---to optimize the function $R^e_n(\lambda;b,\widehat{\Sigma},\widehat{\Omega})$, which is the leading term in \eqref{eqn:excess_risk_approximation}. The main conceptual challenge is that approximation depends on the unknown localization parameter $b \in \mathbb{R}^{k}$, which was defined to be the limit of the sequence $\sqrt{n}(\beta_n-\beta_0)$. Since $b$ cannot be estimated consistently, we suggest handling this \emph{localization} parameter in a way that we think is entirely analogous to the evaluation of risk functions in statistical decision theory: optimizing either the worst-case or the average risk. Our suggested method provides a well-defined mapping from the data (through the estimated covariance matrices $\widehat{\Sigma}$ and $\widehat{\Omega}$) to values of the regularization parameter. We provide details below.        

\subsubsection{Choosing $\lambda$ to minimize the worst-case approximate excess risk} \label{subsec:minimax}

We first consider the problem of minimizing the worst-case approximate excess risk. Let $\mathcal{B} \subseteq \mathbb{R}^{k}$ be a user-specified set of potential values for $b$. For example, one could take  $\mathcal{B} \equiv \{ b \in \mathbb{R}^{k} :  \|b\| \leq B\}$. for some known $B >0 $, where $\| \cdot \|$ denotes the standard Euclidean norm. We say that $\widehat{\lambda}^*_n$ is a $\mathcal{B}$-minimax selection of the regularization parameter in ridge regression if $\widehat{\lambda}^*_n  = n \cdot \widehat{\lambda}^* $ where $\widehat{\lambda}^*$ solves the minimax problem
\begin{equation} \label{eqn:minimax_choice_lambda}  
\adjustlimits \inf_{\lambda \geq 0 }  \sup_{ b \in \mathcal{B} } R^e_n(\lambda;b,\widehat{\Sigma},\widehat{\Omega}).
\end{equation}
We make two comments about the minimax problem in \eqref{eqn:minimax_choice_lambda}. First, if the set $\mathcal{B}$ is unbounded, then the $\mathcal{B}$-minimax choice of the regularization parameter in ridge regression is $\widehat{\lambda}_n = 0$. This means that in order to obtain a different choice of $\widehat{\lambda}_n$ under the minimax criterion we will require a bound on the parameter $b \in \mathbb{R}^k$. 

Second, while the outer optimization problem in \eqref{eqn:minimax_choice_lambda} could be handled via grid search (as it is done whenever $\widehat{\lambda}_n$ is chosen by means of cross-validation), the inner optimization problem requires the evaluation of the worst-case approximate excess risk.  Since only the first term in the expression of $R^e_n(\lambda;b,\widehat{\Sigma},\widehat{\Omega})$ in \eqref{eqn: approximation to the prediction risk of ridge} depends on $b$, we can solve for the worst-case risk by solving the following constrained optimization problem:
\begin{equation} \label{eqn: max approximate risk} 
\widehat{v}_{\textrm{worst-case}}(\lambda) \equiv \sup_{b} b^{\top}  (\widehat{\Sigma} + \lambda \mathbb{I}_k)^{-1} \widehat{\Sigma}  (\widehat{\Sigma} + \lambda \mathbb{I}_k)^{-1} b \quad \textrm{ subject to } b \in \mathcal{B}. 
\end{equation}
Algebra shows that when $\mathcal{B} \equiv \{ b \in \mathbb{R}^{k} : \|b\| \leq B\}$---for some known $B >0$, and with $\| \cdot \|$ given by the Euclidean norm---the constrained optimization problem in \eqref{eqn: max approximate risk} can be solved in closed-form up to a maximum eigenvalue computation (denoted by $\mu_{\textrm{max}}(\cdot)$):
\[ \widehat{v}_{\textrm{worst-case}} (\lambda) = B^2 \cdot \mu_{\textrm{max}}\left(  (\widehat{\Sigma} + \lambda \mathbb{I}_k)^{-1} \widehat{\Sigma}  (\widehat{\Sigma} + \lambda \mathbb{I}_k)^{-1}  \right). \]
Thus, in the special case in which the Euclidean norm of $b$ is bounded by $B$, the $\mathcal{B}$-minimax choice of $\widehat{\lambda}_n$ can be implemented as $\widehat{\lambda}^*_n = n \times \widehat{\lambda}^*$ where $\widehat{\lambda}^*$ solves the problem  
\begin{equation} \label{eqn: mminimax lambda general case}
\inf_{\lambda \in \mathbb{R}_{+}} \lambda^2 B^2 \mu_{\textrm{max}}\left( (\widehat{\Sigma} + \lambda \mathbb{I}_k)^{-1} \widehat{\Sigma}  (\widehat{\Sigma} + \lambda \mathbb{I}_k)^{-1}   \right) + \textrm{trace} \left(  (\widehat{\Sigma} + \lambda \mathbb{I}_k)^{-1} \widehat{\Omega}  (\widehat{\Sigma} + \lambda \mathbb{I}_k)^{-1} \widehat{\Sigma} \right).
\end{equation}
The minimization problem in \eqref{eqn: mminimax lambda general case} is a nonlinear (and nonconvex) optimization problem over $\mathbb{R}_{+}$ and, in general, does not have a closed-form solution. Our suggestion is to minimize this function over the same grid of parameter values used in the standard implementation of (leave-one-out) cross-validation for ridge regression provided in standard statistical packages such as \texttt{glmnet}.

\subsubsection{Choosing $\lambda$ to minimize average approximate excess risk} \label{subsec:average}
We now consider the problem of choosing the regularization parameter of ridge regression by minimizing the \emph{average} approximate excess risk. Let $\pi$ denote a user-specified probability distribution over $\mathbb{R}^{k}$. We say that $\widehat{\lambda}^*_n$ is a $\pi$-optimal selection of the regularization parameter in ridge regression if $\widehat{\lambda}^*_n  = n \cdot \widehat{\lambda}^* $ where $\widehat{\lambda}^*$ solves the problem
\begin{equation} \label{eqn:pi_choice_lambda}  
\inf_{\lambda \geq 0 } \mathbb{E}_{b \sim \pi} [  R^e_n(\lambda;b,\widehat{\Sigma},\widehat{\Omega})]. 
\end{equation}
Once again, since only the first term in the expression of $R^e_n(\lambda;b,\widehat{\Sigma},\widehat{\Omega})$ in \eqref{eqn: approximation to the prediction risk of ridge} depends on $b$, we can solve for the average excess risk by evaluating the following expectation:
\begin{equation} \label{eqn: average approximate risk} 
\widehat{v}_{\textrm{average}}(\lambda) \equiv \mathbb{E}_{b \sim \pi} \left [ b^{\top}  (\widehat{\Sigma} + \lambda \mathbb{I}_k)^{-1} \widehat{\Sigma}  (\widehat{\Sigma} + \lambda \mathbb{I}_k)^{-1} b \right]. 
\end{equation}
This expectation can be evaluated analytically in some cases. For example, suppose $b \sim \mathcal{N}_{k} \left( \mathbf{0}, C^2 \mathbb{I}_k \right)$, where $C$ is a user-specified hyper-parameter. Algebra shows that  
\[ \widehat{v}_{\textrm{average}} (\lambda) = C^2 \cdot \textrm{trace}\left(  (\widehat{\Sigma} + \lambda \mathbb{I}_k)^{-1} \widehat{\Sigma}  (\widehat{\Sigma} + \lambda \mathbb{I}_k)^{-1}  \right). \]
A similar formula can be derived if the distribution over $b$ is elliptical with mean zero and a covariance matrix proportional to the identity matrix; see Appendix \ref{subsec:elliptical_distributions}. The $\pi$-optimal choice of $\widehat{\lambda}_n$ in these cases is $\widehat{\lambda}^*_n = n \times \widehat{\lambda}^*$, where $\widehat{\lambda}^*$ solves the problem  
\begin{equation} \label{eqn: average lambda general case}
\inf_{\lambda \in \mathbb{R}_{+}} \lambda^2 C^2 \textrm{trace}\left( (\widehat{\Sigma} + \lambda \mathbb{I}_k)^{-1} \widehat{\Sigma}  (\widehat{\Sigma} + \lambda \mathbb{I}_k)^{-1}   \right) + \textrm{trace} \left(  (\widehat{\Sigma} + \lambda \mathbb{I}_k)^{-1} \widehat{\Omega}  (\widehat{\Sigma} + \lambda \mathbb{I}_k)^{-1} \widehat{\Sigma} \right).
\end{equation}
Just as for the $\mathcal{B}$-minimax choice of regularization parameter, our suggestion is to minimize this function over the same grid of parameter values used in the standard implementation of (leave-one-out) cross-validation for ridge regression provided in standard statistical packages.

\subsubsection{Isotropic Features} 
We now specialize the derivations in Sections \ref{subsec:minimax} and \ref{subsec:average} to the case in which $\Sigma$---the matrix of second moments of the vector of covariates---is known to be  proportional to identity matrix; that is, $\Sigma = \sigma^2_{x} \mathbb{I}_k$. \cite{hastie2022surprises} refer to predictions problems of this form as prediction with \emph{isotropic features}.\footnote{\cite{hastie2022surprises} further assume that $\sigma^2_{x}=1$.} Algebra shows that with isotropic features, both the minimax and average-risk objectives outlined in the previous subsections simplify substantially, allowing explicit characterization of the optimal regularization parameter for ridge regression. 
The key observation is that under isotropic features, the approximate excess risk in \eqref{eqn: approximation to the prediction risk of ridge} simplifies to
\begin{equation}\label{eqn: isotropic oracle approximation net of sigma^2}
    R^e_n( \lambda; b,\sigma^2_x,\Omega ) = \frac{1}{n}\frac{\lambda^2\sigma^2_x}{(\sigma^2_x + \lambda)^2}b^{\top}b + \frac{1}{n}\frac{\sigma^2_x}{(\sigma^2_x + \lambda)^2} \operatorname{trace}(\Omega).
\end{equation}
The following theorem presents the optimal selection of the regularization parameter for ridge regression under the minimax and average risk criterion.

\begin{theorem}[Optimal choice of regularization parameter with isotropic features]\label{theorem: selection of lambda}
Suppose that the assumptions of Theorem \ref{theorem: approximation to the prediction risk of ridge} hold. Assume that $\Sigma=\sigma_x^2 \mathbb{I}_k, \: \sigma_x^2>0,$ and that there exist estimators $(\widehat{\sigma}_x^2,\widehat{\Omega})$ that are consistent for $(\sigma_x^2,\Omega)$. Then:

\begin{enumerate}
    \item[(1)] If $\mathcal{B} \equiv \{ b \in \mathbb{R}^{k} \: \ \:  \|b\| \leq B\}$---for some known $B>0$, and with $\| \cdot \|$ given by the Euclidean norm---the $\mathcal{B}$-minimax selection of the regularization parameter in ridge regression is
    \[\widehat{\lambda}_{n,\mathrm{minimax}}^*=n \cdot \frac{\operatorname{trace}(\widehat{\Omega})}{\widehat{\sigma}_x^2 B^2}.\]

    \item[(2)] If $\mathbb E_\pi[b^\top b]<\infty$ and $\mathbb E_\pi[b^\top b]>0$, the $\pi$-optimal selection of the regularization parameter in ridge regression is
    \[\widehat{\lambda}_{n,\pi}^* =n \cdot \frac{\operatorname{trace}(\widehat{\Omega})}{\widehat{\sigma}_x^2 \mathbb E_\pi[b^\top b]}.\]
     
\end{enumerate}
\end{theorem}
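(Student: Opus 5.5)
The plan is to plug the isotropic estimate $\widehat{\Sigma}=\widehat{\sigma}^2_x\mathbb{I}_k$ into the approximate excess risk, so that \eqref{eqn: isotropic oracle approximation net of sigma^2} gives
\[
R^e_n(\lambda;b,\widehat{\sigma}^2_x,\widehat{\Omega}) = \frac{1}{n}\frac{\widehat{\sigma}^2_x}{(\widehat{\sigma}^2_x+\lambda)^2}\Big(\lambda^2\, b^\top b + \operatorname{trace}(\widehat{\Omega})\Big).
\]
In this form the risk depends on $b$ only through the scalar $b^\top b$, and it is nondecreasing in $b^\top b$. Both criteria therefore reduce to one scalar minimization. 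For the minimax criterion with $\mathcal{B}=\{\|b\|\le B\}$, the supremum over $b$ is attained at $b^\top b=B^2$, consistent with the eigenvalue formula $\mu_{\max}=\widehat{\sigma}^2_x/(\widehat{\sigma}^2_x+\lambda)^2$. For the average criterion, linearity of expectation replaces $b^\top b$ by $\mathbb{E}_\pi[b^\top b]$, which is finite by assumption, so the expectation is well defined. In both cases the problem becomes
\[
\min_{\lambda\ge 0} f(\lambda), \qquad f(\lambda)\equiv \frac{\widehat{\sigma}^2_x\,(\lambda^2 c + t)}{(\widehat{\sigma}^2_x+\lambda)^2},
\]
with $c=B^2$ or $c=\mathbb{E}_\pi[b^\top b]$ (both strictly positive) and $t=\operatorname{trace}(\widehat{\Omega})$. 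The common factor $1/n$ is dropped since it does not affect the minimizer.

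The key step is solving this scalar problem. Differentiating,
\[
f'(\lambda)=\widehat{\sigma}^2_x\,\frac{2\lambda c(\widehat{\sigma}^2_x+\lambda)-2(\lambda^2c+t)}{(\widehat{\sigma}^2_x+\lambda)^3}
=\frac{2\widehat{\sigma}^2_x\,(\lambda c\,\widehat{\sigma}^2_x - t)}{(\widehat{\sigma}^2_x+\lambda)^3}.
\]
The numerator is linear and increasing in $\lambda$, so $f'$ has a single sign change on $[0,\infty)$. It is negative for $\lambda<t/(c\widehat{\sigma}^2_x)$ and positive afterwards. Hence $f$ is strictly decreasing and then strictly increasing, and the unique global minimizer is $\widehat{\lambda}^*=t/(c\,\widehat{\sigma}^2_x)$. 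This holds even though $f$ need not be convex. Multiplying by $n$ gives the two formulas in the theorem.

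The main obstacle is technical care about when the estimates are well behaved. We need $\widehat{\sigma}^2_x>0$ and $t>0$ for the minimizer to be interior and unique, and $\widehat{\Omega}$ positive semidefinite so that $t\ge 0$. Because $\widehat{\sigma}^2_x\to\sigma^2_x>0$ and $\operatorname{trace}(\widehat{\Omega})\to\operatorname{trace}(\Omega)>0$ in probability, these conditions hold with probability approaching one, and I would state the result on that event. If $t=0$, the same derivative argument gives minimizer $\lambda=0$, which still agrees with the formula. I would also note, by the continuous mapping theorem, that $\widehat{\lambda}^*_n/n$ converges in probability to $\operatorname{trace}(\Omega)/(\sigma^2_x c)$. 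This deterministic limit means the selected $\lambda_n$ satisfies condition (i) of Theorem \ref{theorem: approximation to the prediction risk of ridge}, so the approximation used to motivate the choice applies to it.
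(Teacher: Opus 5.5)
Your proposal is correct and follows the paper's own route. Both criteria reduce to a scalar function of $b^\top b$: the worst case is attained at $b^\top b=B^2$, and linearity of expectation gives $\mathbb{E}_\pi[b^\top b]$. The sign of the derivative $2\sigma_x^2(\lambda c\,\sigma_x^2-\operatorname{trace}(\Omega))/(\sigma_x^2+\lambda)^3$ then shows the objective decreases and then increases, with minimizer $\operatorname{trace}(\Omega)/(c\,\sigma_x^2)$. The differences are minor. You work directly with the plug-in values $(\widehat{\sigma}^2_x,\widehat{\Omega})$, which is what the definitions of the $\mathcal{B}$-minimax and $\pi$-optimal selections call for, and you add the positivity-with-probability-approaching-one caveat and uniqueness of the minimizer. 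The paper instead does the algebra at the population values and leaves the plug-in step implicit.
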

\begin{proof}
See Appendix \ref{subsec: selection of lambda}
\end{proof}

Since the introduction of the ridge regression estimator, the problem of providing a concrete recommendation for its regularization parameter has been analyzed in multiple papers; for example, see \cite{gibbons1981} and the references therein.  The idea of using an approximation of the risk function to choose regularization parameters has a long precedent in statistics and econometrics; see, for example the discussion of \cite{abadie2019choosing} on p. 744. To the best of our knowledge the formulae we provide in Theorem \ref{theorem: selection of lambda} are new to the literature. For the formula we provide in (2), the closest reference we were able to find in the literature is the optimal regularization parameter derived in Theorem 2.1 in \cite{dobriban2018high}, p. 254. Their recommendation is based on high-dimensional model where covariates and regression residuals are independent, but covariates need not be isotropic. Assuming a prior on regression coefficients such that $\mathbb{E}_{\pi}[\beta]=0$ and $\textrm{Var}_{\pi}[\beta] = \alpha^2 \mathbb{I}_k/k$ (see their Assumption RRC), \cite{dobriban2018high} recommend a regularization parameter of the form 
\[ \hat{\lambda} = n \cdot\frac{k}{n} \frac{1}{\alpha^2} = \frac{k}{\alpha^2}, \]
which assumes $\sigma^2_{\epsilon}=1$. In the homoskedastic case with isotropic covariates---$\Omega=\sigma^2_{\epsilon} \sigma^2_x \mathbb{I}_k$---our formula matches their recommendation since 
 \[\widehat{\lambda}_{n,\pi}^* =n \cdot \frac{\operatorname{trace}(\widehat{\Omega})}{\widehat{\sigma}_x^2 \mathbb E_\pi[b^\top b]} = \frac{k}{\mathbb{E}_{\pi}[\beta_n^{\top}\beta_n]} =  \frac{k}{\alpha^2}, \]
where we have used the fact that the reference vector $\beta_0$ is zero, that $\beta_n=b/\sqrt{n}$, and that $\sigma^2_{\epsilon}=1$. However, our result allows for heteroskedasticity and autocorrelation in the data generating process by adjusting the regularization parameter as a function of $\Omega$. The main limitation of our result, as discussed before, is that our theory is only applicable to low-dimensional models. As we discussed after the statement of Theorem \ref{theorem: approximation to the prediction risk of ridge}, it is not entirely clear how to construct a good estimator for $\Omega$ in high-dimensional models. While we think that the extension of our main theorems to high-dimensional environments could significantly extend the relevance and applicability of our results, such an extension is outside the scope of this paper (and will likely require different techniques).

More generally, it is important to mention again that one of the initial motivations of our paper was to recover an expression analogous to the optimal regularization parameter for the ridge regression estimator given by \cite{hastie2022surprises} in their Corollary 6 to Theorem 6. Their results show that using high-dimensional asymptotics where $k/n \rightarrow \gamma$, the oracle choice of regularization parameter for the ridge estimator (assuming $x_i$ and $\epsilon_i$ are independent and that $\Sigma = \mathbb{I}_k$) takes the form $\lambda^* = n \mathbb{E}[\epsilon_i^2]\gamma/\| \beta \|^2$. Assuming conditional homoskedasticity, our formulae in \eqref{eqn: isotropic-pi-optimal introduction} and \eqref{eqn: isotropic-minimax-optimal introduction} become analogous to \cite{hastie2022surprises}'s formula, but with the unknown $||\beta_n||^2$ replaced by either the prior mean of $\| b \|^2$ or the maximum value of $\|b\|^2$.

\section{Simulations}
\label{sec:simulations}

We now examine the extent to which the asymptotic approximations presented in Section \ref{sec:main results} capture the finite-sample behavior of the ridge estimator. For this purpose, we present Monte-Carlo simulations to calculate and compare the prediction risk of the following estimators: 
\begin{enumerate}
\item  The ridge regression with the regularization parameter chosen to minimize the worst-case approximate excess-risk criterion in \eqref{eqn: mminimax lambda general case}.
\item  The ridge estimator tuned according to the standard leave-one-out cross-validation; see Section 3 in \cite{patil2021uniform}.
\item The OLS estimator.
\end{enumerate}
Henceforth, we refer to these estimators as minimax ridge, LOO-CV ridge, and OLS, respectively. Recall that $\mathcal{N}_k$ denotes a multivariate Gaussian distribution of dimension $k$.

\subsection{DGP-1}
\label{subsec:dgp1}

We first consider the stylized Gaussian homoskedastic linear regression model
\[
y_i=x_i^\top\beta_n+\epsilon_i,
\qquad
x_i\sim \mathcal{N}_k(0,\sigma_x^2 \mathbb{I}_k),
\qquad
\epsilon_i\sim \mathcal{N}_k(0,\sigma^2),
\qquad
\beta_n=\beta_0+\frac{b}{\sqrt n},
\]
with $\beta_0= (0, \ldots,0)^\top$. We assume that \(\epsilon_i\) is independent of \(x_i\). Let \(\Sigma=\mathbb E[x_i x_i^\top]=\sigma_x^2 \mathbb{I}_k\), and \(\Omega=\mathbb E[x_i x_i^\top \epsilon_i^2]\). Under homoskedasticity and independence, \(\Omega=\sigma^2\Sigma=\sigma^2\sigma_x^2 \mathbb{I}_k\).

Throughout this subsection we set $k=10$, $\sigma_x^2=1$, $\sigma^2=1$, and $b=(1,\ldots,1)^\top$. We set $B\equiv \|b\|=\sqrt{10}$. In our simulations, we consider six different sample sizes: $n\in\{500,1000,1500,2000,$ $2500,3000\}$. Each simulation uses 2,000 Monte Carlo repetitions.

We use the term \emph{minimax ridge} to refer to the ridge estimator that uses the regularization parameter given below in \eqref{eqn: DGP-1 minimax} and $\beta_0$ as the reference vector. Since the vector of covariates has isotropic features ($\Sigma=\sigma_x^2 \mathbb{I}_k$), the regularization parameter is chosen as in Theorem \ref{theorem: selection of lambda}; that is:
\begin{equation}\label{eqn: DGP-1 minimax}
    \widehat{\lambda}_{n,\mathrm{minimax}} = n\frac{\operatorname{trace}(\widehat{\Omega})}{\widehat{\sigma}_x^2 B^2},
\end{equation}
where \(\widehat{\sigma}_x^2\) is estimated from the sample covariance matrix of the covariates, and \(\Omega\) is consistently estimated by an estimator of \(\mathbb{E}[x_i x_i^\top\epsilon_i^2]\), obtained by replacing $\epsilon_i$ with OLS residuals and imposing conditional homoskedasticity. 

For the LOO-CV (leave-one-out cross-validation) ridge, we use the leave-one-out cross-validation procedure as in \citet[Section 3]{patil2021uniform}.\footnote{We set \texttt{sklearn.linear\_model.RidgeCV} with \texttt{cv=None} and \texttt{fit\_intercept=False}.}
The candidate set of regularization parameters $\lambda_n=nc$ is based on a nonuniform grid of $500$ strictly positive candidate values of $c$ spanning $[0.001,5]$. We let \texttt{RidgeCV} choose the minimizer of the leave-one-out criterion over that grid.

For each sample size $n$ and $m = 1, \dots, 2000$, we proceed as follows:
\begin{enumerate}
    \item Draw an independent sample $D_n^{(m)}=\{(Y_i^{(m)},X_i^{(m)})\}_{i=1}^n$ from the Gaussian model.
    \item Calculate the  minimax ridge, LOO-CV ridge, and OLS using \(D_n^{(m)}\).
    \item Calculate the prediction error for each estimator \(\widehat\beta_n^{(m)}\), 
    \[ \mathbb{E}_{(Y,X)\sim (\beta_n,\mathbb{P})} \left[ (Y-X^\top \widehat\beta_n^{(m)})^2 \right] = \sigma^2+ (\widehat\beta_n^{(m)}-\beta_n)^\top \Sigma (\widehat\beta_n^{(m)}-\beta_n). \]
\end{enumerate}
We then approximate the prediction risk by averaging the prediction errors across repetitions:
\[\frac{1}{2000}
\sum_{m=1}^{2000}
\mathbb{E}_{(Y,X)\sim (\beta_n,\mathbb{P})}
\left[
(Y-X^\top \widehat\beta_n^{(m)})^2
\right].\]

\begin{figure}[t]
\centering
\includegraphics[width=0.72\textwidth]{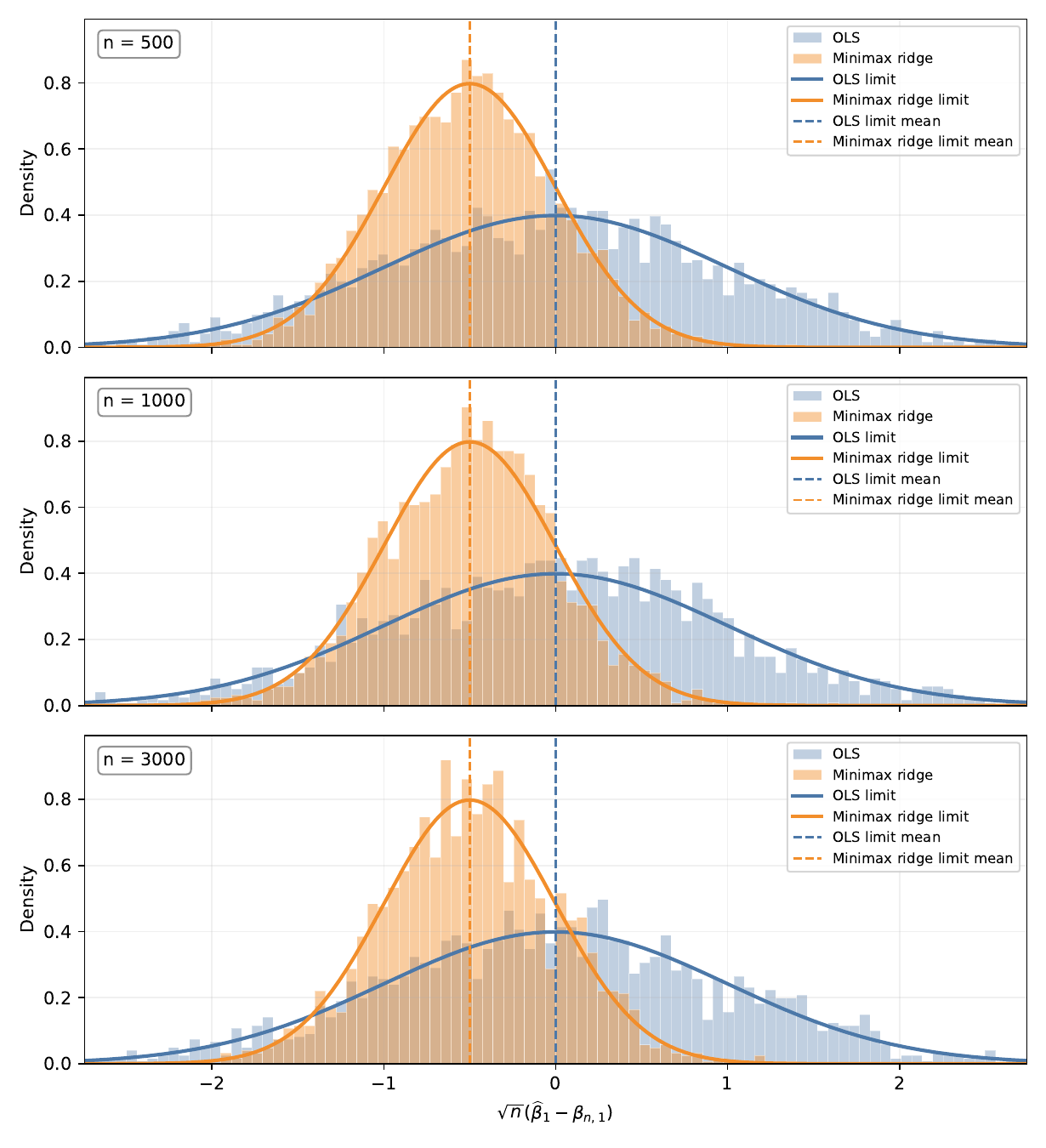}
\caption{Distribution of the first coefficient under DGP-1}
\label{fig:dgp1-z1-ols-minimax}
\end{figure}

Figure~\ref{fig:dgp1-z1-ols-minimax} reports the distribution of $\sqrt{n}(\widehat\beta_1-\beta_{n,1})$ for the OLS and minimax ridge estimator, with the Gaussian limits implied by Theorem~\ref{theorem: big lambda_n approximation}: $\mathcal{N}_1(0,\tfrac{\sigma^2}{\sigma_x^2})$ for OLS and $\mathcal{N}_1(-\lambda b_1/(\sigma_x^2+\lambda),\frac{\sigma^2\sigma_x^2}{(\sigma_x^2+\lambda)^2})$ for minimax ridge. The approximation is accurate already at $n=500$ and remains so as $n$ grows. Consistent with our asymptotic approximations, the minimax ridge distribution is shifted away from the origin, but markedly more concentrated than the OLS distribution. In contrast, OLS is correctly centered but presents a large dispersion.

We now compare the estimators using the excess prediction risk; that is, $R(\widehat{a}_{\lambda_n};\beta_n,\mathbb P_n) - \sigma^2$. 
Figure~\ref{fig:dgp1-risk} compares the excess prediction risk of the three estimators relative to minimax ridge estimator across different sample sizes. The figure suggests that, in this design, our recommended minimax ridge outperforms both LOO-CV ridge and OLS. Across sample sizes, the risk of LOO-CV ridge is about 20 percent higher than that of minimax ridge.

\begin{figure}[h!]
\centering
\includegraphics[width=0.66\textwidth]{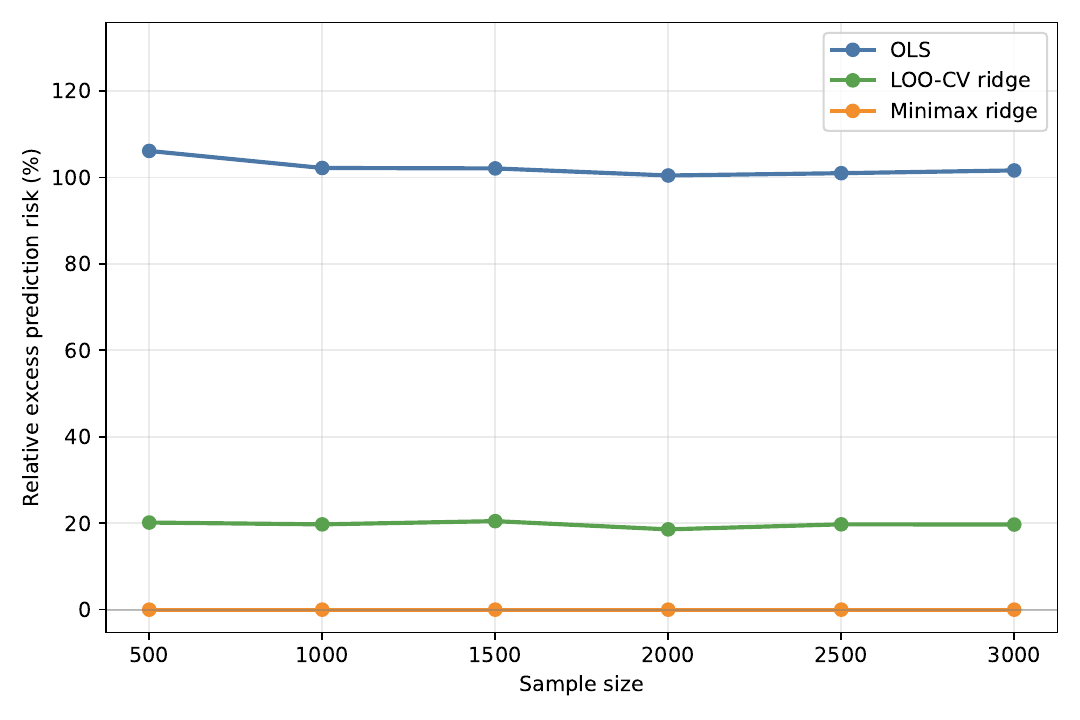}
\caption{Relative excess prediction risk under DGP-1}
\label{fig:dgp1-risk}
\end{figure}

We next evaluate our approximation to the excess prediction risk of ridge estimators presented in Theorem \ref{theorem: approximation to the prediction risk of ridge}. 
Figure~\ref{fig:dgp1-risk-function} displays a comparison between the Monte Carlo estimate of excess prediction risk $R(\widehat{a}_{\lambda_n};\beta_n,\mathbb P_n)-\sigma^2$ (solid orange line) and our \emph{feasible} approximate excess-risk $R_n^e(\lambda;b,\widehat{\Sigma},\widehat{\Omega})$ (shaded region) for several values of the scaled regularization parameter $\lambda_n/n \in [0.001,5]$.\footnote{For each value of $\lambda$, the approximate excess-risk $R_n^e(\lambda;b,\widehat{\Sigma},\widehat{\Omega})$ is data-dependent; therefore, for each simulation we have a different value. To capture its range of values, we report the central 95 percent range after dropping the lowest 2.5 percent and highest 2.5 percent of values for each candidate ratio.} In all the panels, we scale the vertical axis by the sample size $n$ and report results consistent with \eqref{eqn:excess_risk_approximation}, showing that the approximate excess-risk calculation tracks the excess-risk curve closely. Figure~\ref{fig:dgp1-risk-function} also presents the approximate excess risk minimizer (vertical dashed line), which is the median (across repetitions) of $\lambda_n/n$ based on \eqref{eqn: DGP-1 minimax}. For all three sample sizes ($n=500$, 1,000, and 3,000), the median selected tuning ratio is approximately 0.99, which is close to the theoretical value $\lambda_n/n=1$.

\begin{figure}[h!]
\centering
\includegraphics[width=0.72\textwidth]{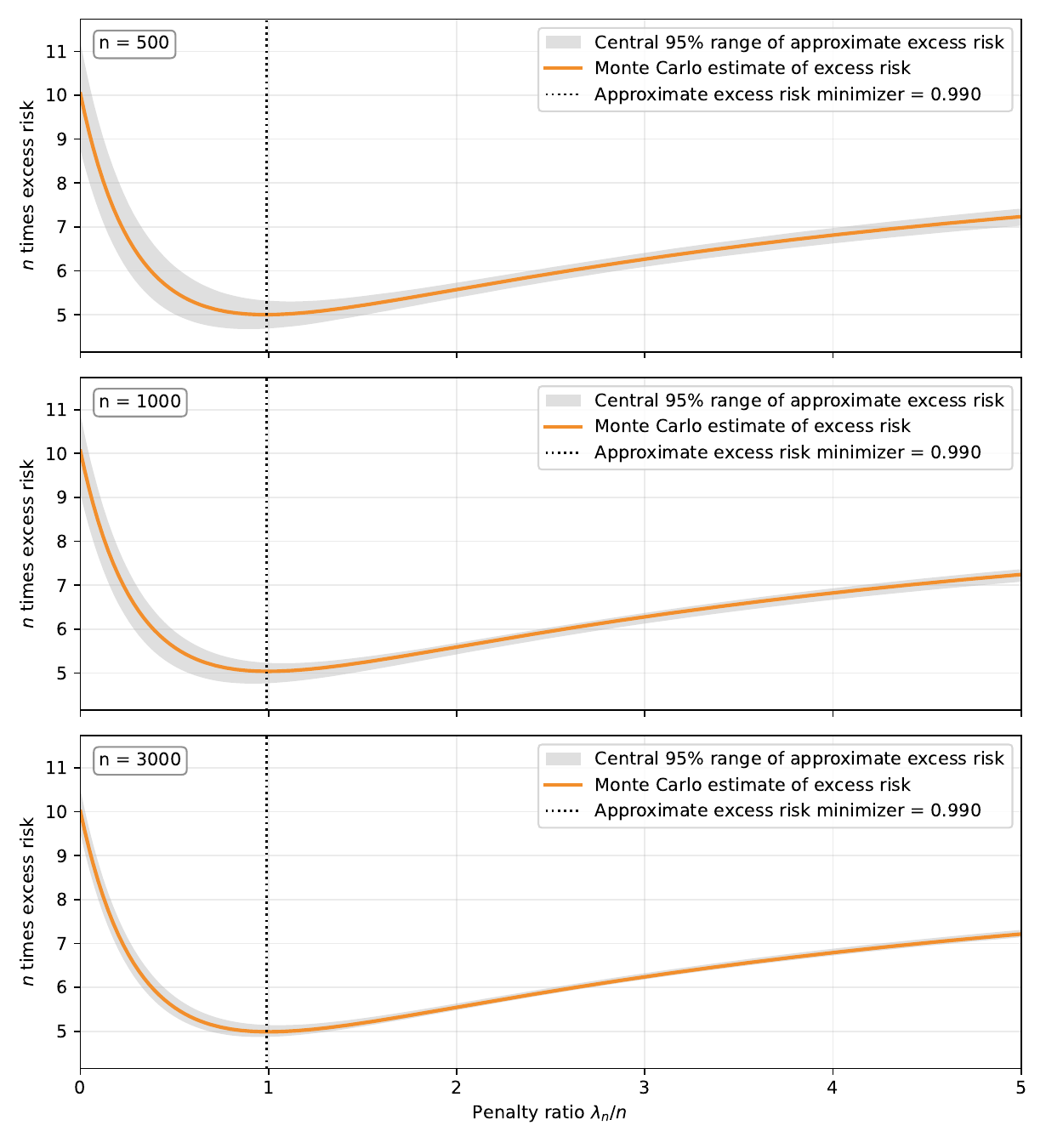}
\caption{Risk curves under DGP-1}
\label{fig:dgp1-risk-function}
\end{figure}

We also compare the scaled regularization parameter ($\widehat\lambda_n/n$) used by the LOO-CV and minimax ridge estimators. Figure~\ref{fig:dgp1-ratio} in Appendix~\ref{subsec: additional simulation results} confirms that the scaled regularization parameter of the minimax ridge---based on \eqref{eqn: DGP-1 minimax}---is close to $1$, which is the limit of $\lambda_n/n$, as expected. By contrast, the scaled regularization parameter of the LOO-CV ridge is widely dispersed across the values it can take.

\subsection{DGP-2}
\label{subsec:dgp2}

We next consider a low-dimensional version of the weak-signal design based on \citet[Section~3.1]{shen2025weak}. Our asymptotic theory in Section~\ref{sec:main results} considers models in which the number of covariates is small relative to the sample size. Thus, before considering a high-dimensional version of this design, it is useful to first understand the performance of our approximations in a non-isotropic design with a smaller number of covariates. We note that this design (and also the one presented in the next section) implies a minor departure from Assumption \ref{asn: stationary-prediction-law} (strict stationarity) by allowing heterogeneous variances. 

Following \cite{shen2025weak}, we let the outcome be generated by the linear model $y_i=x_i^\top\beta_n+\epsilon_i$, with $n=500$ and $k=50$. The ridge reference vector is $\beta_0=0_k$, so the local parameter is $b = \sqrt{n} (\beta_n - \beta_0) = \sqrt n\beta_n$. The design matrix is generated as $X=\Sigma_1^{1/2}Z\Sigma_2^{1/2}$, where $Z \in \mathbb{R}^{n \times k}$ has i.i.d. standard normal entries, $\Sigma_1 \in \mathbb{R}^{n \times n}$ controls the dependence across observations, and $\Sigma_2 \in \mathbb{R}^{k\times k}$ controls the dependence across regressors. For $\Sigma_1$, we follow \cite{shen2025weak} and set $(\Sigma_1)_{ij}=2^{-|i-j|}$ for $1 \leq i,j \leq n$. For $\Sigma_2$, we also follow \cite{shen2025weak} and construct the second moments of covariates by orthogonal diagonalization so that $\Sigma_2=Q\Lambda Q^\top$. The orthogonal matrix $Q$ is randomly drawn once, the eigenvalues in $\Lambda$ are drawn once from $U[0.5,1.5]$, and $\Sigma_2$ is then held fixed throughout the simulation. We note that, by construction, the matrix $\Sigma_2$ is not isotropic. In Appendix~\ref{subsec:dgp2-diagnostics} we describe the distribution of eigenvalues of this matrix and the coefficient vector used in the simulation. Since the diagonal entries of $\Sigma_1$ are equal to one, we can show that the population prediction covariance is $\Sigma=\mathbb E[x_i x_i^\top]=\Sigma_2$. The errors are independent of the regressors and are generated with diagonal heteroskedasticity, with the diagonal entries of the error-variance matrix drawn once from $U[0.5,1.5]$ and held fixed ($\sigma_{\epsilon,i}^2$). Thus, the asymptotic variance $\Omega$ in Assumption \ref{asn: assumption high-level OLS ridge} is $\Omega
=\bar\sigma_\epsilon^2\Sigma_2$, where $\bar{\sigma}^2_{\epsilon}\equiv n^{-1}\sum_{i=1}^{n}\sigma_{\epsilon,i}^2$, which is close to $1$ in this design consistent with the strong law of large numbers.

Following the coefficient-generation model in \citet[Section~3.1]{shen2025weak}, we first draw a preliminary coefficient vector $\widetilde\beta = (\widetilde\beta_1\ldots,\widetilde\beta_k)$, with independent coordinates satisfying $\widetilde\beta_j = 0$ with probability $0.2$ and $\widetilde\beta_j \sim \mathcal{N}_1(0,1.25)$ with probability $0.8$. For each target value of \(R^2\) ($5\%$, $20\%$, and $50\%$), we rescale this same draw to obtain the corresponding true coefficient vector \(\beta_n\). The resulting \(\beta_n\) is then held fixed across the $2,000$ Monte Carlo repetitions.

For the minimax criterion, we calibrate the radius $B$ from the same coefficient-generation scheme: for each target $R^2$, we draw 2,000 coefficient vectors, rescale each draw to the target $R^2$, compute $\sqrt {n}\|\beta_n\|$, and use the 90th percentile as the baseline value of $B$. The $\sqrt n$ normalization is the one used in the local parameter $b=\sqrt n\beta_n$. To accommodate the finite-sample effect of estimating a non-negligible number of coefficients, we adjust the approximate excess risk function by multiplying the variance term by $n/(n-k)$. Since this design is not isotropic, minimax ridge uses the general non-isotropic approximate excess-risk criterion implied by Theorem~\ref{theorem: approximation to the prediction risk of ridge}. In each Monte Carlo repetition, we draw a new sample $(X^{(m)},Y^{(m)})$, and we plug in the sample analogues $\widehat\Sigma_{\mathrm{df}}^{(m)}=X^{(m)\top}X^{(m)}/(n-k),\quad \widehat\sigma_{\epsilon,\mathrm{df}}^{2,(m)}
=\frac{1}{n-k}\sum_{i=1}^n\widehat u_{i,\mathrm{OLS}}^{(m)2}$, and $\widehat\Omega_{\mathrm{df}}^{(m)}
=\widehat\sigma_{\epsilon,\mathrm{df}}^{2,(m)}
\widehat\Sigma_{\mathrm{df}}^{(m)}$. For each target $R^2$ and each Monte Carlo repetition, we follow the same steps as in DGP-1.   

In each Monte Carlo repetition \(m\),  we choose  $\widehat\lambda_{n,\mathrm{minimax}}^{(m)} = (n-k) \times \widehat{\lambda}^{*,(m)}$, where  $\widehat{\lambda}^{*,(m)}$ minimizes, over a grid of candidate values, the sum of \eqref{eqn:dgp2_df_minimax_lambda} and \eqref{eqn:dgp2_df_minimax_lambda_variance} defined below:
\begin{equation}
\label{eqn:dgp2_df_minimax_lambda}
\lambda^2 B^2
\mu_{\textrm{max}}
\left(
(\widehat\Sigma_{\mathrm{df}}^{(m)}+\lambda \mathbb{I}_k)^{-1}
\widehat\Sigma_{\mathrm{df}}^{(m)}
(\widehat\Sigma_{\mathrm{df}}^{(m)}+\lambda \mathbb{I}_k)^{-1}
\right),
\end{equation}
\begin{equation}\label{eqn:dgp2_df_minimax_lambda_variance} 
\frac{n}{n-k}
\operatorname{trace}
\left[
(\widehat\Sigma_{\mathrm{df}}^{(m)}+\lambda \mathbb{I}_k)^{-1}
\widehat\Omega_{\mathrm{df}}^{(m)}
(\widehat\Sigma_{\mathrm{df}}^{(m)}+\lambda \mathbb{I}_k)^{-1}
\widehat\Sigma_{\mathrm{df}}^{(m)}
\right].
\end{equation}
Relative to the criterion in \eqref{eqn: mminimax lambda general case}, the only modification consists of replacing the plug-in covariance matrices by their degrees-of-freedom adjusted analogues and multiplying the variance component by $n/(n-k)$. After the minimization, the reported ridge tuning parameter is $\widehat\lambda_{n,\mathrm{minimax}}^{(m)}$, so all selected tuning ratios are reported on the scale $\widehat\lambda_n/n$.

For the LOO-CV ridge estimator, we use the same implementation as in DGP-1.\footnote{\texttt{sklearn.linear\_model.RidgeCV} with \texttt{cv=None} and \texttt{fit\_intercept=False}.} The number of Monte Carlo repetitions is 2,000 for each target value of \(R^2\).

\begin{figure}[t]
\centering
\includegraphics[width=0.72\textwidth]{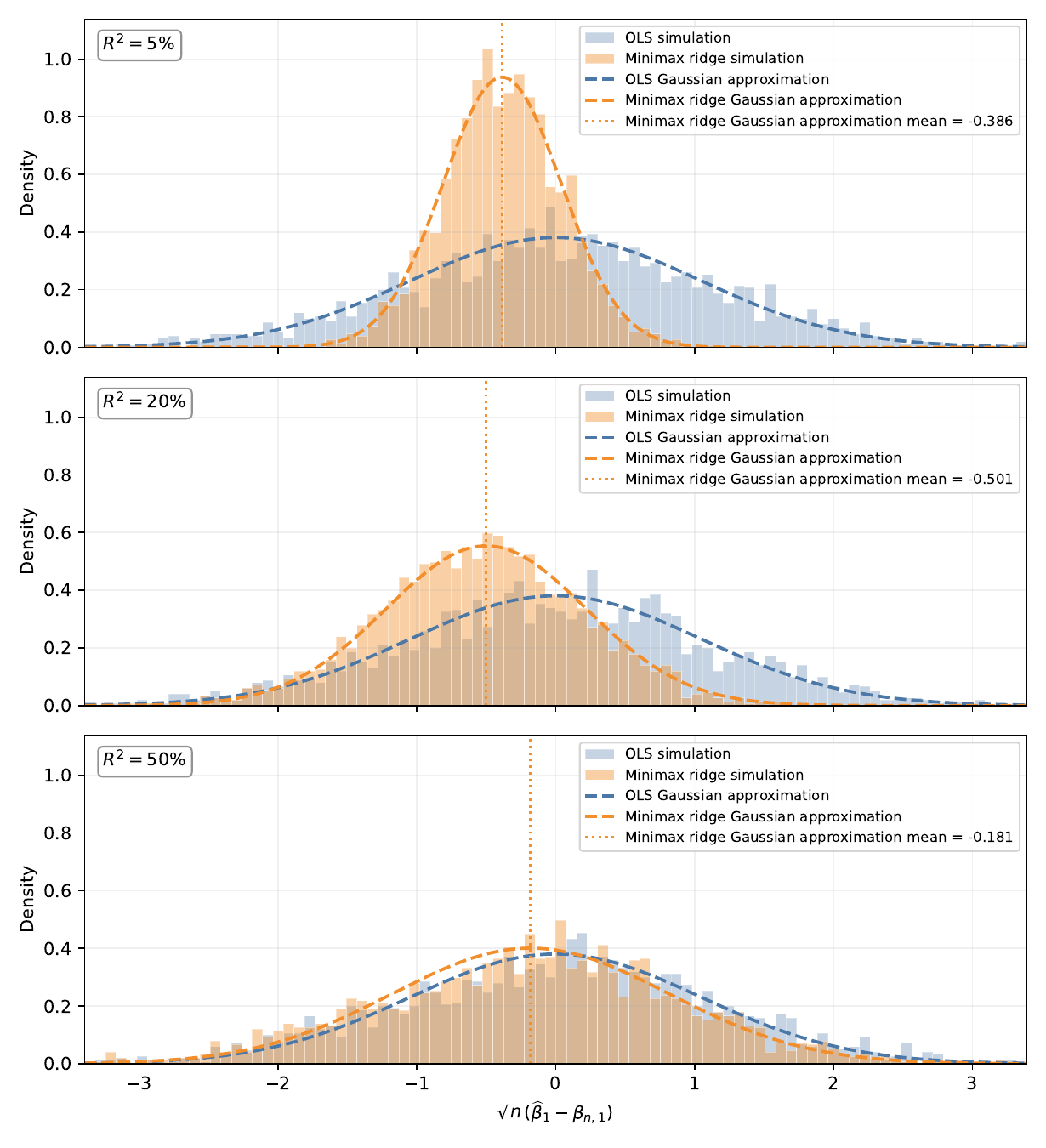}
\caption{Distribution of the first coefficient under DGP-2}
\label{fig:dgp2-z1-ols-minimax}
\end{figure}

The following figures report our simulation results. Figure~\ref{fig:dgp2-z1-ols-minimax} reports the simulated distribution of $\sqrt{n}(\widehat\beta_1-\beta_{n,1})$ for OLS and minimax ridge, together with the Gaussian approximations implied by our theory. The Gaussian curves are computed using the population matrices in the simulation, rather than fitted to the Monte Carlo histograms. Similar to the pattern observed in DGP-1, comparing with the OLS results, the distribution of the minimax ridge estimator is slightly biased but has a smaller variance.

\begin{figure}[h!]
\centering
\includegraphics[width=0.66\textwidth]{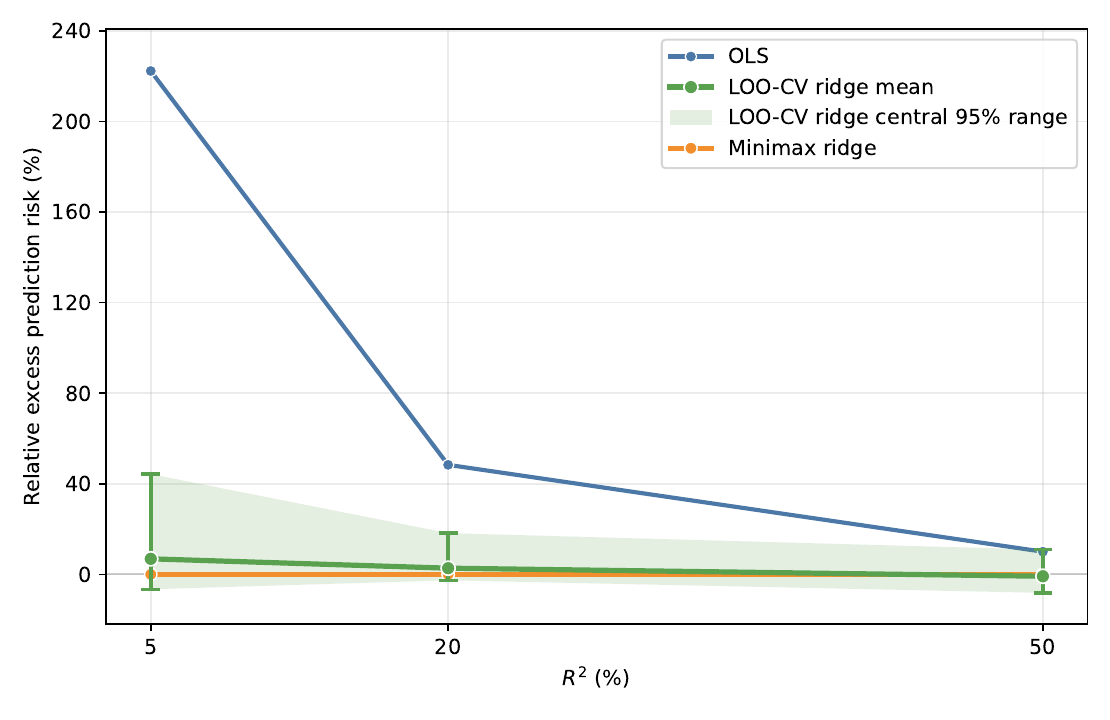}
\caption{Relative excess prediction risk under DGP-2, LOO-CV and minimax}
\label{fig:dgp2-risk}
\end{figure}

We next compare the three estimators using excess prediction risk. Figure~\ref{fig:dgp2-risk} reports the excess prediction risk of OLS and LOO-CV ridge relative to minimax ridge. The shaded region reports the central 95 percent range of the repetition-specific relative excess risk of LOO-CV ridge. The figure shows that OLS is dominated by both ridge estimators, especially when $R^2$ is small. LOO-CV ridge and minimax ridge have very similar excess prediction risk: LOO-CV is slightly above minimax ridge for the lower two values of $R^2$ and slightly below minimax ridge when $R^2=50\%$.

\begin{figure}[h!]
\centering
\includegraphics[width=0.72\textwidth]{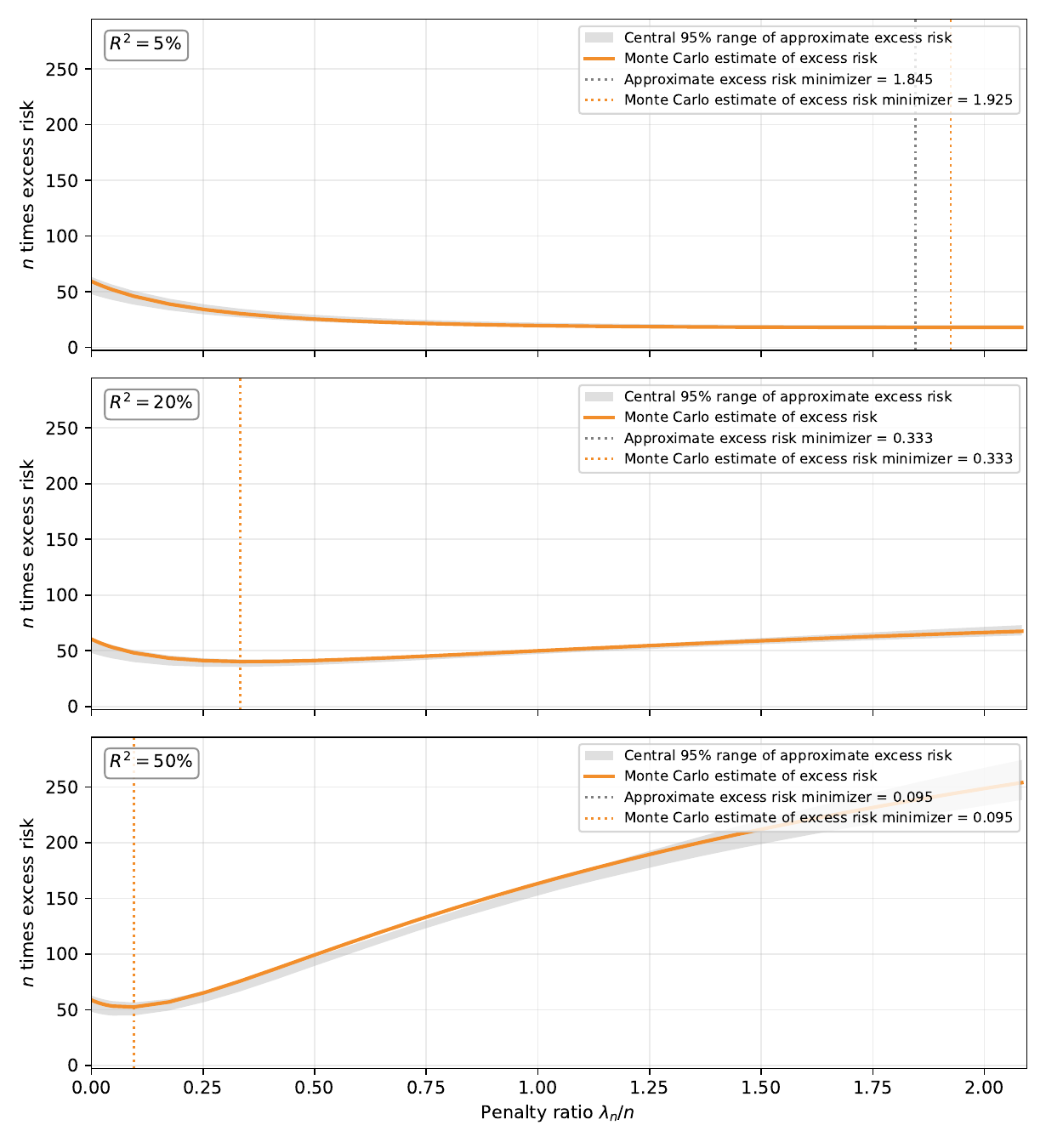}
\caption{Risk curves under DGP-2}
\label{fig:dgp2-risk-function}
\end{figure}

We next evaluate our approximation to the excess prediction risk of ridge estimators presented in Theorem~\ref{theorem: approximation to the prediction risk of ridge} under DGP-2. Figure~\ref{fig:dgp2-risk-function} displays a comparison between the Monte Carlo estimate of the scaled excess prediction risk, $n\left\{R(\widehat a_{\lambda_n};\beta_n,\mathbb P_n)-\sigma^2\right\}$ (solid orange line), and our scaled \emph{feasible} approximate excess risk $n R_n^e(\lambda;b,\widehat{\Sigma},\widehat{\Omega})$ (shaded region).  To capture the range of the feasible approximation (across repetitions), we report the central 95 percent range after dropping the lowest 2.5 percent and highest 2.5 percent of values for each candidate tuning ratio. In this lower-dimensional version of the design, the feasible approximate excess-risk calculation tracks the Monte Carlo risk curve closely, including the location of the low-risk region. Appendix~\ref{subsec:dgp2-diagnostics} reports additional diagnostics for the eigenvalues of $\Sigma_2$, the coefficient vectors, and the selected tuning ratios.

\subsection{DGP-3}
\label{subsec:dgp3}

We next consider the high-dimensional version of the weak-signal design based on Section~3.1 of \citet{shen2025weak}. Our asymptotic theory focused on models in which the number of covariates is small relative to the sample size. Thus, it is of interest to understand the performance of our approximations in models where the number of covariates are roughly of the same order as the sample size. The setup is exactly the same as the setup in DGP-2 in the previous subsection, except here we set $k=300$. 

\begin{figure}[t]
\centering
\includegraphics[width=0.72\textwidth]{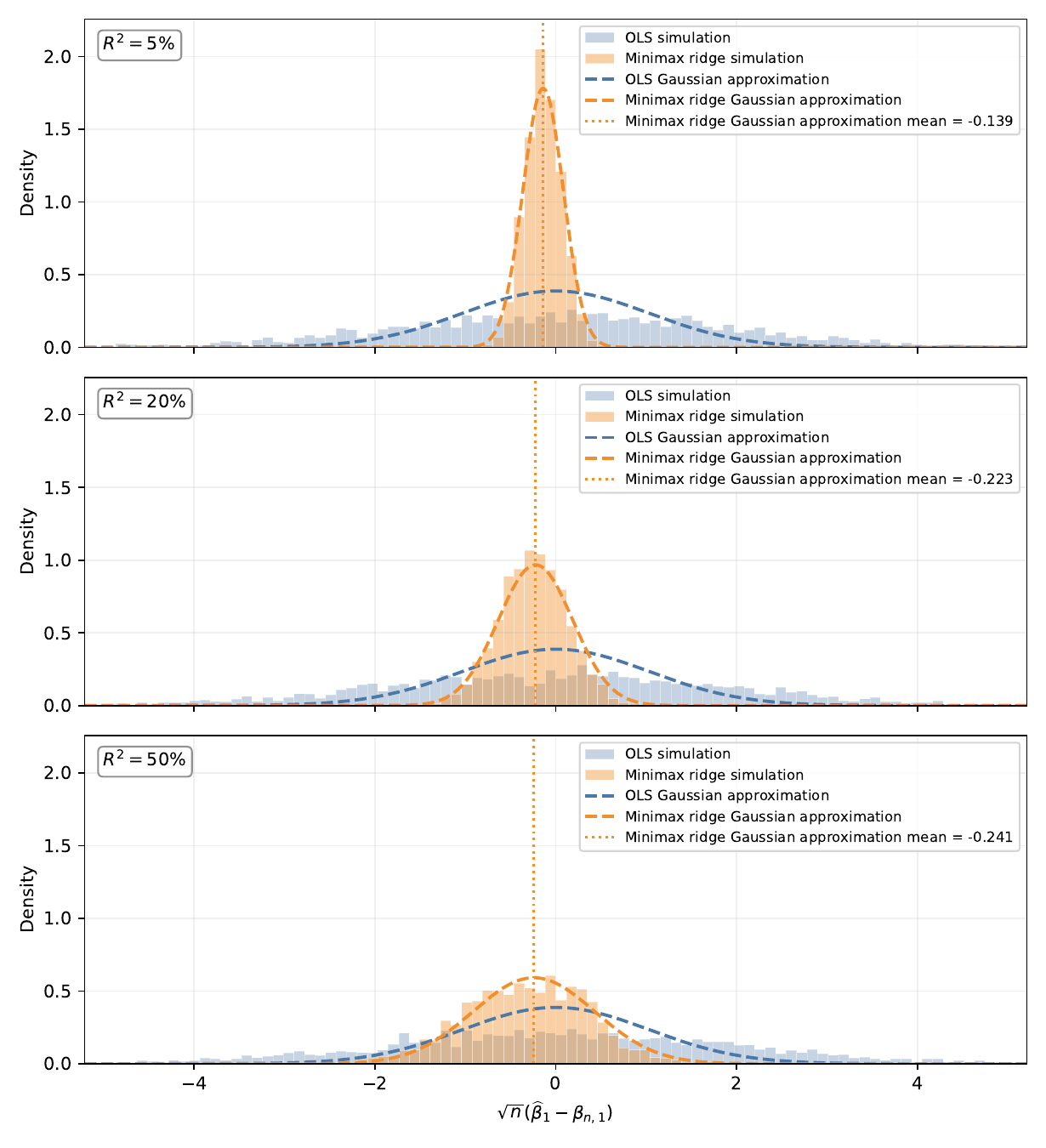}
\caption{Distribution of the first coefficient under DGP-3}
\label{fig:dgp3-z1-ols-minimax}
\end{figure}

The following figures report our simulation results. Figure~\ref{fig:dgp3-z1-ols-minimax} reports the simulated distribution of $\sqrt{n}(\widehat\beta_1-\beta_{n,1})$ for OLS and minimax ridge, together with the Gaussian approximations implied by our theory. The Gaussian curves are computed using the population matrices in the simulation, rather than fitted to the Monte Carlo histograms. Similar to the pattern observed in DGP-1 and DGP-2, comparing with the OLS results, the distribution of the minimax ridge estimator is biased by shrinkage but has a smaller variance.

\begin{figure}[h!]
\centering
\includegraphics[width=0.66\textwidth]{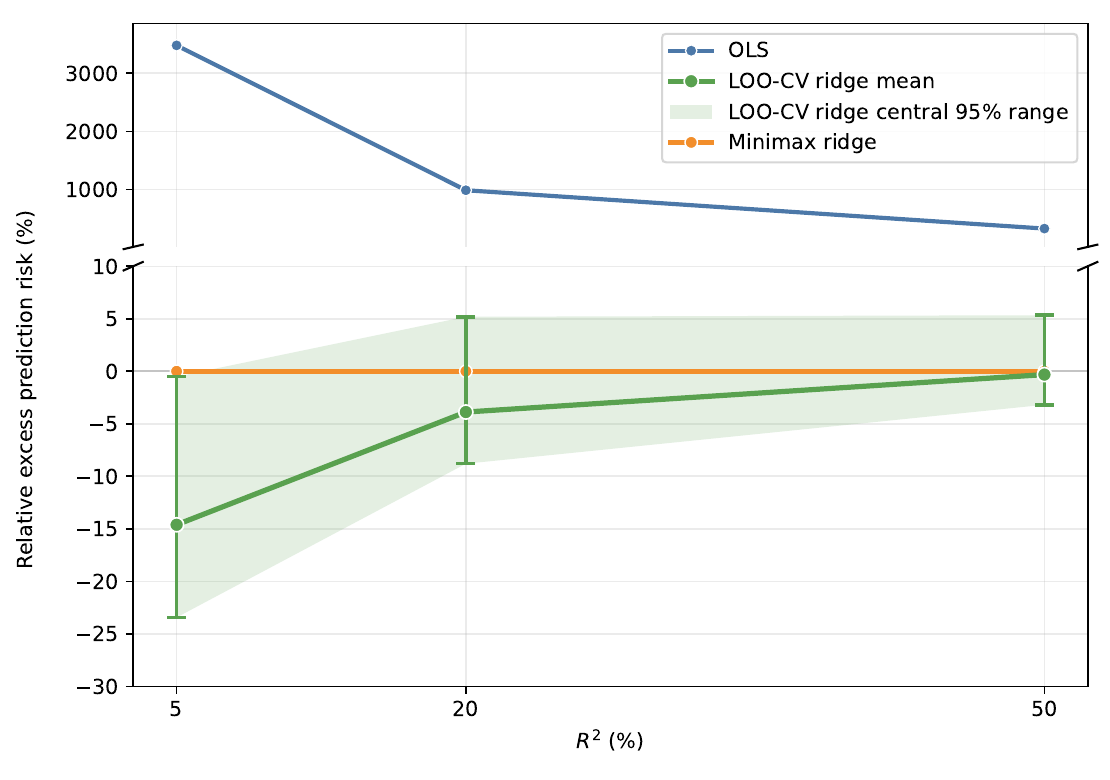}
\caption{Relative excess prediction risk under DGP-3, LOO-CV and minimax}
\label{fig:dgp3-risk}
\end{figure}

We next compare the three estimators using excess prediction risk. Figure~\ref{fig:dgp3-risk} reports the excess prediction risk of OLS and LOO-CV ridge relative to minimax ridge. The shaded region reports the central 95 percent range of the repetition-specific relative excess risk of LOO-CV ridge. The figure shows that OLS is dominated by both ridge estimators, especially when $R^2$ is small. LOO-CV ridge and minimax ridge have comparable excess prediction risk, although LOO-CV ridge has slightly lower mean excess risk than minimax ridge in this high-dimensional design. Thus, in the regime where $k/n$ is large, the approximation-based rule should not be interpreted as uniformly improving on LOO-CV, but it remains competitive with the standard cross-validation benchmark.

\begin{figure}[h!]
\centering
\includegraphics[width=0.72\textwidth]{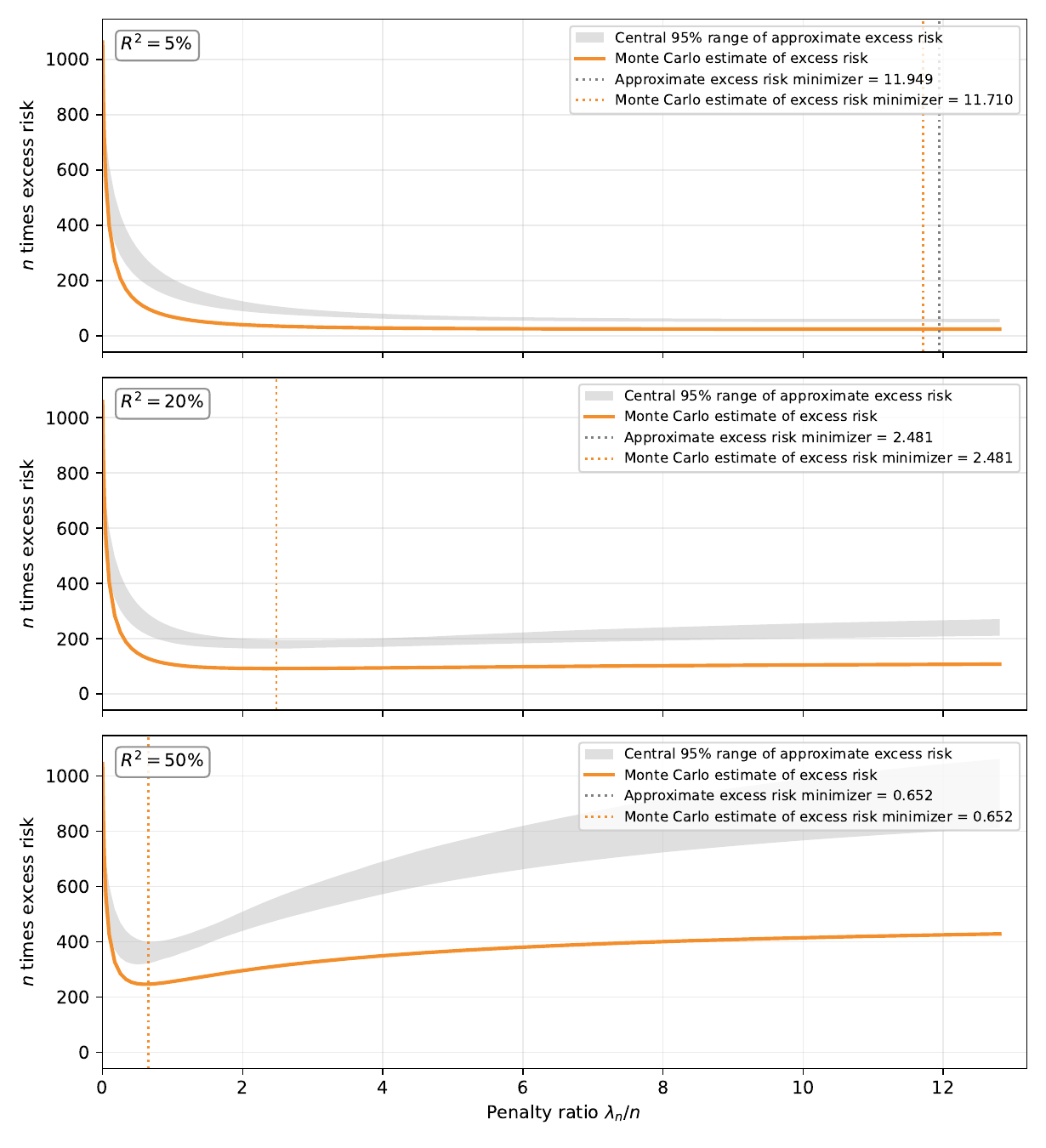}
\caption{Risk curves under DGP-3}
\label{fig:dgp3-risk-function}
\end{figure}

We next evaluate our approximation to the excess prediction risk of ridge estimators presented in Theorem~\ref{theorem: approximation to the prediction risk of ridge} under DGP-3. Figure~\ref{fig:dgp3-risk-function} displays a comparison between the Monte Carlo estimate of the scaled excess prediction risk, $n\left\{R(\widehat a_{\lambda_n};\beta_n,\mathbb P_n)-\sigma^2\right\}$ (solid orange line), and our scaled \emph{feasible} approximate excess risk $nR_n^e(\lambda;b,\widehat{\Sigma},\widehat{\Omega})$ (shaded region). To capture the range of the feasible approximation (across repetitions), we report the central 95 percent range after dropping the lowest 2.5 percent and highest 2.5 percent of values for each candidate tuning ratio. In this high-dimensional version of the design, the feasible approximation is less accurate than in DGP-2, especially away from the low-risk region. This pattern is expected because $k/n=0.6$ is far from the low-dimensional asymptotic framework used to derive our approximation. Appendix~\ref{subsec:dgp3-diagnostics} reports additional diagnostics for the eigenvalues of $\Sigma_2$, the coefficient vectors, and the selected tuning ratios.

\section{Conclusion} \label{sec:conclusion} 
We presented a simple Gaussian approximation to the finite-sample distribution of the ridge regression estimator. Our approximation is based on   nonstandard asymptotics where $i)$ we let the estimator's regularization parameter grow proportionally to the sample size; and $ii)$ we treat the population regression coefficients as \emph{local} to the reference vector that defines the estimator's direction of shrinkage. In contrast to other asymptotic approximations in the literature, we allow for general forms of heteroskedasticity and autocorrelation in the data generating process (at the cost of considering a low-dimensional model where the number of covariates is not allowed to grow with the sample size). We used our simple Gaussian approximation to propose two new strategies to select the regularization parameter for the ridge regression estimator. The suggested strategies select the regularization parameter to minimize either average or worst-case excess prediction risk, where risk is computed using our suggested Gaussian approximation.

\bibliographystyle{ecta}
\bibliography{references.bib}

\begin{thebibliography}{35}
\newcommand{\enquote}[1]{``#1''}
\expandafter\ifx\csname natexlab\endcsname\relax\def\natexlab#1{#1}\fi

\bibitem[\protect\citeauthoryear{Abadie and Kasy}{Abadie and Kasy}{2019}]{abadie2019choosing}
\textsc{Abadie, A. and M.~Kasy} (2019): \enquote{Choosing among Regularized Estimators in Empirical Economics: The Risk of Machine Learning,} \emph{The Review of Economics and Statistics}, 101, 743--762.

\bibitem[\protect\citeauthoryear{Andrews and Mikusheva}{Andrews and Mikusheva}{2022}]{andrews2022optimal}
\textsc{Andrews, I. and A.~Mikusheva} (2022): \enquote{Optimal decision rules for weak GMM,} \emph{Econometrica}, 90, 715--748.

\bibitem[\protect\citeauthoryear{Anup and Maddala}{Anup and Maddala}{1984}]{anup1984ridge}
\textsc{Anup, A.~K. and G.~Maddala} (1984): \enquote{Ridge estimators for distributed lag models,} \emph{Communications in Statistics-Theory and Methods}, 13, 217--225.

\bibitem[\protect\citeauthoryear{Atanasov, Zavatone-Veth, and Pehlevan}{Atanasov et~al.}{2024}]{atanasov2024risk}
\textsc{Atanasov, A., J.~A. Zavatone-Veth, and C.~Pehlevan} (2024): \enquote{Risk and cross validation in ridge regression with correlated samples,} \emph{arXiv preprint arXiv:2408.04607}.

\bibitem[\protect\citeauthoryear{Brockwell and Davis}{Brockwell and Davis}{2013}]{Brockwell_Davis:2013}
\textsc{Brockwell, P.~J. and R.~A. Davis} (2013): \emph{Time series: theory and methods}, Springer Science \& Business Media.

\bibitem[\protect\citeauthoryear{Cambanis, Huang, and Simons}{Cambanis et~al.}{1981}]{cambanis1981elliptical}
\textsc{Cambanis, S., S.~Huang, and G.~Simons} (1981): \enquote{On the theory of elliptically contoured distributions,} \emph{Journal of Multivariate Analysis}, 11, 368--385.

\bibitem[\protect\citeauthoryear{Cattaneo, Jansson, and Newey}{Cattaneo et~al.}{2018}]{cattaneo2018inference}
\textsc{Cattaneo, M.~D., M.~Jansson, and W.~K. Newey} (2018): \enquote{Inference in linear regression models with many covariates and heteroscedasticity,} \emph{Journal of the American Statistical Association}, 113, 1350--1361.

\bibitem[\protect\citeauthoryear{DasGupta}{DasGupta}{2008}]{Dasgupta08}
\textsc{DasGupta, A.} (2008): \emph{Asymptotic Theory of Statistics and Probability}, Springer Verlag.

\bibitem[\protect\citeauthoryear{Dobriban and Wager}{Dobriban and Wager}{2018}]{dobriban2018high}
\textsc{Dobriban, E. and S.~Wager} (2018): \enquote{High-dimensional asymptotics of prediction: Ridge regression and classification,} \emph{The Annals of Statistics}, 46, 247--279.

\bibitem[\protect\citeauthoryear{Dou and M{\"u}ller}{Dou and M{\"u}ller}{2021}]{dou2021generalized}
\textsc{Dou, L. and U.~K. M{\"u}ller} (2021): \enquote{Generalized Local-to-Unity Models,} \emph{Econometrica}, 89, 1825--1854.

\bibitem[\protect\citeauthoryear{Ferguson}{Ferguson}{1967}]{ferguson_1967}
\textsc{Ferguson, T.~S.} (1967): \emph{Mathematical Statistics: A Decision Theoretic Approach}, New York: Academic Press.

\bibitem[\protect\citeauthoryear{Friedman, Hastie, and Tibshirani}{Friedman et~al.}{2017}]{ESL2017}
\textsc{Friedman, J., T.~Hastie, and R.~Tibshirani} (2017): \emph{The elements of statistical learning: data mining, inference and prediction}, vol.~1 of \emph{Series in Statistics}, New York: Springer, second edition ed.

\bibitem[\protect\citeauthoryear{Gibbons}{Gibbons}{1981}]{gibbons1981}
\textsc{Gibbons, D.~G.} (1981): \enquote{A Simulation Study of Some Ridge Estimators,} \emph{Journal of the American Statistical Association}, 76, 131--139.

\bibitem[\protect\citeauthoryear{Hansen}{Hansen}{2022}]{hansen2022econometrics}
\textsc{Hansen, B.} (2022): \emph{Econometrics}, Princeton University Press.

\bibitem[\protect\citeauthoryear{Hansen}{Hansen}{2016}]{hansen2016efficient}
\textsc{Hansen, B.~E.} (2016): \enquote{Efficient shrinkage in parametric models,} \emph{Journal of Econometrics}, 190, 115--132.

\bibitem[\protect\citeauthoryear{Hastie, Montanari, Rosset, and Tibshirani}{Hastie et~al.}{2022}]{hastie2022surprises}
\textsc{Hastie, T., A.~Montanari, S.~Rosset, and R.~J. Tibshirani} (2022): \enquote{Surprises in high-dimensional ridgeless least squares interpolation,} \emph{The Annals of Statistics}, 50, 949--986.

\bibitem[\protect\citeauthoryear{Hirano and Wright}{Hirano and Wright}{2017}]{hirano2017forecasting}
\textsc{Hirano, K. and J.~H. Wright} (2017): \enquote{Forecasting with model uncertainty: Representations and risk reduction,} \emph{Econometrica}, 85, 617--643.

\bibitem[\protect\citeauthoryear{Hoerl}{Hoerl}{1962}]{horel1962application}
\textsc{Hoerl, A.~E.} (1962): \enquote{Application of ridge analysis to regression problems,} \emph{Chemical Engineering Progress}, 58, 54--59.

\bibitem[\protect\citeauthoryear{Hoerl and Kennard}{Hoerl and Kennard}{1970}]{hoerl1970ridge}
\textsc{Hoerl, A.~E. and R.~W. Kennard} (1970): \enquote{Ridge regression: Biased estimation for nonorthogonal problems,} \emph{Technometrics}, 12, 55--67.

\bibitem[\protect\citeauthoryear{Hoerl}{Hoerl}{2020}]{hoerl2020ridge}
\textsc{Hoerl, R.~W.} (2020): \enquote{Ridge regression: a historical context,} \emph{Technometrics}, 62, 420--425.

\bibitem[\protect\citeauthoryear{Hsu, Kakade, and Zhang}{Hsu et~al.}{2012}]{hsu2012random}
\textsc{Hsu, D., S.~M. Kakade, and T.~Zhang} (2012): \enquote{Random design analysis of ridge regression,} in \emph{Conference on learning theory}, JMLR Workshop and Conference Proceedings, 9--1.

\bibitem[\protect\citeauthoryear{Knight and Fu}{Knight and Fu}{2000}]{knight2000asymptotics}
\textsc{Knight, K. and W.~Fu} (2000): \enquote{Asymptotics for lasso-type estimators,} \emph{Annals of statistics}, 1356--1378.

\bibitem[\protect\citeauthoryear{Kock, Pedersen, and S{\o}rensen}{Kock et~al.}{2026}]{kock2026data}
\textsc{Kock, A.~B., R.~S. Pedersen, and J.~R.-V. S{\o}rensen} (2026): \enquote{Data-driven tuning parameter selection for high-dimensional vector autoregressions,} \emph{Journal of the American Statistical Association}, 121, 289--299.

\bibitem[\protect\citeauthoryear{Liu, Zheng, and Feng}{Liu et~al.}{2020}]{liu2020estimation}
\textsc{Liu, X., S.~Zheng, and X.~Feng} (2020): \enquote{Estimation of error variance via ridge regression,} \emph{Biometrika}, 107, 481--488.

\bibitem[\protect\citeauthoryear{Montiel~Olea, Rush, Velez, and Wiesel}{Montiel~Olea et~al.}{2026}]{montiel2026distributionally}
\textsc{Montiel~Olea, J.~L., C.~Rush, A.~Velez, and J.~Wiesel} (2026): \enquote{The distributionally robust prediction error of the LASSO and related estimators,} \emph{The Annals of Statistics}, 54, 1006--1027.

\bibitem[\protect\citeauthoryear{Mourtada}{Mourtada}{2022}]{mourtada2022exact}
\textsc{Mourtada, J.} (2022): \enquote{Exact minimax risk for linear least squares, and the lower tail of sample covariance matrices,} \emph{The Annals of Statistics}, 50, 2157--2178.

\bibitem[\protect\citeauthoryear{Mourtada and Rosasco}{Mourtada and Rosasco}{2022}]{mourtada2022elementary}
\textsc{Mourtada, J. and L.~Rosasco} (2022): \enquote{An elementary analysis of ridge regression with random design,} \emph{Comptes Rendus. Math{\'e}matique}, 360, 1055--1063.

\bibitem[\protect\citeauthoryear{Patil, Du, and Tibshirani}{Patil et~al.}{2024}]{patil2024optimal}
\textsc{Patil, P., J.-H. Du, and R.~J. Tibshirani} (2024): \enquote{Optimal ridge regularization for out-of-distribution prediction,} \emph{arXiv preprint arXiv:2404.01233}.

\bibitem[\protect\citeauthoryear{Patil, Wei, Rinaldo, and Tibshirani}{Patil et~al.}{2021}]{patil2021uniform}
\textsc{Patil, P., Y.~Wei, A.~Rinaldo, and R.~Tibshirani} (2021): \enquote{Uniform consistency of cross-validation estimators for high-dimensional ridge regression,} in \emph{International conference on artificial intelligence and statistics}, PMLR, 3178--3186.

\bibitem[\protect\citeauthoryear{Phillips}{Phillips}{1988}]{phillips1988regression}
\textsc{Phillips, P.~C.} (1988): \enquote{Regression theory for near-integrated time series,} \emph{Econometrica: Journal of the Econometric Society}, 1021--1043.

\bibitem[\protect\citeauthoryear{Powell}{Powell}{2017}]{powell2017identification}
\textsc{Powell, J.~L.} (2017): \enquote{Identification and Asymptotic Approximations: Three Examples of Progress in Econometric Theory,} \emph{Journal of Economic Perspectives}, 31, 107--124.

\bibitem[\protect\citeauthoryear{Shen and Xiu}{Shen and Xiu}{2025}]{shen2025weak}
\textsc{Shen, Z. and D.~Xiu} (2025): \enquote{Can Machines Learn Weak Signals?} Working Paper 33421, National Bureau of Economic Research, Cambridge, MA.

\bibitem[\protect\citeauthoryear{Staiger and Stock}{Staiger and Stock}{1997}]{staiger1997instrumental}
\textsc{Staiger, D. and J.~H. Stock} (1997): \enquote{Instrumental Variables Regression with Weak Instruments,} \emph{Econometrica}, 65, 557--586.

\bibitem[\protect\citeauthoryear{Swindel}{Swindel}{1976}]{swindel1976good}
\textsc{Swindel, B.~F.} (1976): \enquote{Good ridge estimators based on prior information,} \emph{Communications in Statistics-Theory and Methods}, 5, 1065--1075.

\bibitem[\protect\citeauthoryear{Velez}{Velez}{2024}]{velez2024asymptotic}
\textsc{Velez, A.} (2024): \enquote{On the asymptotic properties of debiased machine learning estimators,} \emph{arXiv preprint arXiv:2411.01864}.

\end{thebibliography}

\newpage

\appendix

\section{Proofs of Main Results} 

\subsection{Proof of Proposition \ref{proposition: small lambda_n approximation}} \label{subsec: small lambda_n approximation}

\begin{proof}
Let $X$ be the $n \times k$ matrix that contains $x_i^{\top}$ in its $i$-th row. Let $Y$ and $\epsilon$ be the $n \times 1$ vectors that contain $y_i$ and $\epsilon_i$ (respectively) in their $i$-th row. Define the matrices $\widehat{\Sigma} \equiv X^{\top}X/n$ and $\widehat{A} \equiv  \widehat{\Sigma} + (\lambda_n/n) \mathbb{I}_k$. Using this notation, $ \widehat{\beta}_{\lambda_n} = \widehat{A}^{-1} \left( X^\top Y/n + (\lambda_n/n) \beta_0\right)$ and $\widehat{\beta}_{OLS} = \widehat{\Sigma}^{-1} (X^\top Y)/n$. Algebra shows
\begin{align}
   \sqrt{n}\left( \widehat{\beta}_{\lambda_n} - \beta_n \right) &= {\widehat{A}}^{-1} \left(  X^\top \epsilon/\sqrt{n} -  (\lambda_n/n) \sqrt{n} \left(\beta_n -\beta_0 \right)  \right) \label{eqn: ridge sampling error}\\
     \sqrt{n}\left( \widehat{\beta}_{OLS} - \beta_n \right) &= \widehat{\Sigma}^{-1} X^\top  \epsilon/\sqrt{n}~.\notag
\end{align}
Therefore, 
$$\sqrt{n}\left( \widehat{\beta}_{\lambda_n} - \beta_n \right)-\sqrt{n}\left( \widehat{\beta}_{OLS} - \beta_n \right) = \left(\widehat{A}^{-1}  - \widehat{\Sigma}^{-1} \right)(X^\top  \epsilon)/\sqrt{n} - \widehat{A}^{-1}(\lambda_n/n) \sqrt{n} \left(\beta_n -\beta_0 \right)~,$$
which is $o_p(1)$ because (i) $\lambda_n/n \overset{p}{\rightarrow}0$ implies $\widehat{A}^{-1}  - \widehat{\Sigma}^{-1}$ is $o_p(1)$, (ii) Assumption \ref{asn: assumption high-level OLS ridge} implies $(X^\top  \epsilon)/\sqrt{n}$ is $O_p(1)$, and (iii) because we are assuming $(\lambda_n/n) \sqrt{n} \left(\beta_n -\beta_0 \right) \overset{p}{\to} 0$ in this proposition.
\end{proof}

\subsection{Proof of Theorem \ref{theorem: big lambda_n approximation}} 
\label{subsec: big lambda_n approximation} 

\begin{proof}
Recall that $\widehat{\Sigma} \equiv X^{\top}X/n$ and $\widehat{A} \equiv  \widehat{\Sigma} + (\lambda_n/n) \mathbb{I}_k$. Note that $\widehat{A}  \overset{p}{\rightarrow} \Sigma + \lambda \mathbb{I}_k $ since $\widehat{\Sigma} \overset{p}{\rightarrow} \Sigma$ (Assumption \ref{asn: assumption high-level OLS ridge})  and $\lambda_n/n \overset{p}{\rightarrow} \lambda$. Note also that $\widehat{A}^{-1} (\lambda_n/n) \sqrt{n} \left(\beta_n -\beta_0 \right) \overset{p}{\rightarrow} \left(\Sigma + \lambda \mathbb{I}_k\right)^{-1} \lambda b$. We conclude by using  Equation \eqref{eqn: ridge sampling error}, Assumption \ref{asn: assumption high-level OLS ridge}, and Slutsky's theorem. 
\end{proof}

\subsection{Proof of Theorem \ref{theorem: approximation to the prediction risk of ridge}} \label{subsec: approximation to Ridge prediction risk}
\begin{proof}
We establish the result in two steps. 
We first show that the  excess prediction risk $R_n(\hat{a}_{\lambda_n}; \beta_n, \mathbb{P}_n) $ is approximated by $R^e_n(\lambda;b, \Sigma, \Omega)$ up to a remainder error of size $o(n^{-1})$.  
We then show that $R^e_n(\lambda;b, \Sigma, \Omega) $, which depends on population parameters ($\Sigma$, $\Omega$), can be approximated by $R^e_n(\lambda;b, \widehat{\Sigma},\widehat{\Omega})$, which uses consistent estimators ($\widehat{\Sigma}$, $\widehat{\Omega}$), up to error of size $o_{(\beta_n,\mathbb{P}_n)}(1/n)$.  
The conclusion follows by combining these two steps and the triangle inequality. 

\emph{Step 1:} For a new draw $(x^{\top},\epsilon)$ from the stationary distribution $\mathbb{P}$ associated to $\mathbb{P}_n$, we use $\beta_n$ to construct a new outcome-covariate pair $(y,x)=(x^{\top}\beta_n+\epsilon,x )$. The exact finite-sample prediction risk of $\hat{a}_{\lambda_n}$ is:
\[R_n(\hat{a}_{\lambda_n}; \beta_n, \mathbb{P}_n) = \mathbb{E}_{(\beta_n, \mathbb{P}_n)}\big[(y-x^{\top}\widehat{\beta}_{\lambda_n})^2\big] = \sigma^2 + \mathbb{E}_{(\beta_n, \mathbb{P}_n)}\big[( \widehat{\beta}_{\lambda_n} - \beta_n)^{\top} \Sigma (\widehat{\beta}_{\lambda_n} - \beta_n)\big].\]
Defining $Z_n \equiv \sqrt{n}(\widehat{\beta}_{\lambda_n}-\beta_n)$, this becomes
\[R_n(\hat{a}_{\lambda_n}; \beta_n, \mathbb{P}_n) = \sigma^2 + \frac{1}{n}\mathbb{E}_{(\beta_n, \mathbb{P}_n)}\big[Z_n^{\top}\Sigma Z_n\big].\]
Note that \eqref{eqn: approximation to the prediction risk of ridge} can be rewritten as 
\[R^e_n(\lambda;b, \Sigma, \Omega) =  \frac{1}{n} \mathbb{E}_{Z}[Z^{\top}\Sigma Z],\]
where
\[ Z \sim \mathcal{N}_k(-(\Sigma+\lambda\mathbb{I}_k)^{-1}\lambda b,(\Sigma+\lambda\mathbb{I}_k)^{-1}\Omega(\Sigma+\lambda\mathbb{I}_k)^{-1}).\]
Therefore,
\[R_n(\hat{a}_{\lambda_n}; \beta_n, \mathbb{P}_n) - \sigma^2 -R^e_n(\lambda;b,\Sigma, \Omega) =\frac{1}{n} \Big(\mathbb{E}_{(\beta_n, \mathbb{P}_n)}\big[Z_n^{\top}\Sigma Z_n\big]-\mathbb{E}_{Z}[Z^{\top}\Sigma Z]\Big),\]
so it suffices to show
\[\mathbb{E}_{(\beta_n, \mathbb{P}_n)}\big[Z_n^{\top}\Sigma Z_n\big] \to \mathbb{E}_Z[Z^{\top}\Sigma Z].\]
Under the assumptions of Theorem \ref{theorem: approximation to the prediction risk of ridge}, Slutsky's theorem and Theorem \ref{theorem: big lambda_n approximation} imply that $Z_n \overset{d}{\to}  Z$.

Theorem 6.2 in \citet{Dasgupta08} implies that under assumption iii) of Theorem \ref{theorem: approximation to the prediction risk of ridge} (which is a sufficient condition for uniform integrability) we have $\mathbb{E}_{(\beta_n, \mathbb{P}_n)}\big[Z_n^{\top}\Sigma Z_n\big] \to \mathbb{E}_Z[Z^{\top}\Sigma Z]$.
This establishes the oracle approximation
\[R_n(\hat{a}_{\lambda_n}; \beta_n, \mathbb{P}_n) - \sigma^2 = R^e_n(\lambda;b,\Sigma, \Omega) + o\Big(\frac{1}{n}\Big).\]

\emph{Step 2:} It is sufficient to show that $n\left( R^e_n(\lambda;b, \widehat{\Sigma},\widehat{\Omega}) -  R^e_n(\lambda;b, \Sigma, \Omega)\right) = o_{(\beta_n,\mathbb{P}_n)}(1)$. We claim that this follows from the consistency of the estimators ($\widehat{\Sigma}$, $\widehat{\Omega}$) of ($\Sigma$, $\Omega$). To see this, note that
$$ n R^{e}_n( \lambda; b,\Sigma,\Omega ) \equiv  \lambda^2 b^{\top}  (\Sigma + \lambda \mathbb{I}_k)^{-1} \Sigma  (\Sigma + \lambda \mathbb{I}_k)^{-1} b +   \textrm{trace} \left(  (\Sigma + \lambda \mathbb{I}_k)^{-1} \Omega  (\Sigma + \lambda \mathbb{I}_k)^{-1} \Sigma \right).$$
Then, we conclude by using Slutsky's theorem and continuous mapping theorem. 
\end{proof}

\subsection{Proof of Theorem \ref{theorem: selection of lambda}} 
\label{subsec: selection of lambda} 

\begin{proof}
Under isotropic features, $\Sigma=\sigma_x^2 \mathbb{I}_k$, the approximation of the excess risk of ridge regression in \eqref{eqn: approximation to the prediction risk of ridge} simplifies to
\begin{equation}\label{eq:aux1-thm3}
    R^e_n(\lambda;b,\sigma_x^2,\Omega) = \frac{1}{n}\frac{\lambda^2\sigma_x^2}{(\sigma_x^2+\lambda)^2} b^\top b + \frac{1}{n}\frac{\sigma_x^2}{(\sigma_x^2+\lambda)^2}\operatorname{trace}(\Omega).
\end{equation} 
We use the previous closed-form expression to obtain
\begin{equation}\label{eq:aux2-thm3}
    \partial_\lambda R^e_n(\lambda;b,\sigma_x^2,\Omega)    
    = \frac{1}{n}\frac{2\sigma_x^2}{(\sigma_x^2+\lambda)^3}(\sigma_x^2 ( b^\top b) \lambda - \operatorname{trace}(\Omega))~.
\end{equation}
Note that the previous expression is negative if and only if $\lambda < \lambda^* \equiv \tfrac{\operatorname{trace}(\Omega)}{\sigma_x^2 ( b^\top b)}$. Therefore, $R^e_n(\lambda;b,\sigma_x^2,\Omega)$ as a function of $\lambda$ is decreasing on $[0,\lambda^*)$ and increasing on $[\lambda^*,\infty)$. As a result, we have that $\lambda^* \in \arg  \min_{\lambda \ge 0 } R^e_n(\lambda;b,\sigma_x^2,\Omega) $ for any given $(b,\sigma_x^2,\Omega)$.

We first establish part (1) of Theorem \ref{theorem: selection of lambda}. Note that when $\| \cdot \|$ is the Euclidean norm, the set $\mathcal{B} \equiv \{b \in \mathbb{R}^k : \|b\| \leq B\} = \{b \in \mathbb{R}^k : b^{\top}b \leq B^2\}$. Since the coefficient on $b^\top b$ in the excess risk approximation formula defined by \eqref{eq:aux1-thm3} is nonnegative for every $\lambda\ge 0$, the worst-case value of the approximate excess risk is attained when $b^\top b=B^2$. Hence the minimax problem reduces to minimizing the function
\[Q_{\textrm{minimax}}(\lambda) \equiv \frac{1}{n}\frac{\lambda^2\sigma_x^2}{(\sigma_x^2+\lambda)^2} B^2+\frac{1}{n}\frac{\sigma_x^2}{(\sigma_x^2+\lambda)^2}\operatorname{trace}(\Omega), \qquad \lambda\ge 0,\]
which as a function of $\lambda$ achieves its minimum at $  \lambda_{\mathrm{minimax}}^* = \tfrac{\operatorname{trace}(\Omega)}{\sigma_x^2 B^2}$.

We then establish part (2) of Theorem \ref{theorem: selection of lambda}. Suppose $\mathbb{E}_\pi[b^\top b]<\infty$. Taking expectation with respect to $\pi$ gives
\[Q_{\textrm{average}}(\lambda):=\mathbb{E}_\pi\bigl[R^e_n(\lambda;b,\sigma_x^2,\Omega)\bigr]=\frac{1}{n}\frac{\lambda^2\sigma_x^2}{(\sigma_x^2+\lambda)^2}\mathbb{E}_\pi[b^\top b]+\frac{1}{n}\frac{\sigma_x^2}{(\sigma_x^2+\lambda)^2}\operatorname{trace}(\Omega),\qquad \lambda\ge 0.\]
If $\mathbb{E}_\pi[b^\top b]>0$, algebra shows that $Q_{\textrm{average}}(\cdot)$ achieves its minimum at $  \lambda^*_{\textrm{average}} = \tfrac{\operatorname{trace}(\Omega)}{\sigma_x^2 \mathbb{E}_\pi[b^\top b]}$.
\end{proof}

\section{Additional Results} 

\subsection{Optimal Choice of $\lambda$ for Elliptically Contoured Distributions} \label{subsec:elliptical_distributions} 
In this appendix, we extend the result on choosing $\lambda$ to minimize average approximate excess risk to the more general case in which $b$ follows an elliptically contoured distribution. Following \citet{cambanis1981elliptical}, a $k$-dimensional random vector $X$ is said to follow an elliptical distribution with location parameter $\mu_X \in \mathbb{R}^k$, nonnegative definite matrix $V \in \mathbb{R}^{k \times k}$, and characteristic generator $\psi$ if the characteristic function of $X-\mu_X$ satisfies
\[
\phi_{X-\mu_X}(t)=\psi(t^\top V t), \qquad t \in \mathbb{R}^k.
\]
We denote this by $X \sim EC_k(\mu_X,V,\psi)$.

Now suppose $b \sim EC_k(0,\mathbb{I}_k,\psi)$ has finite second moments. Algebra shows that
\[
\mathbb{E}[bb^\top] = \frac{\mathbb{E}\|b\|^2}{k} \mathbb{I}_k.
\]
Therefore,
\begin{align*}
&\mathbb{E}_{b \sim \pi}\!\left[
b^\top (\widehat{\Sigma}+\lambda \mathbb{I}_k)^{-1}\widehat{\Sigma}(\widehat{\Sigma}+\lambda \mathbb{I}_k)^{-1} b
\right] \\
&\qquad = \mathbb{E}_{b \sim \pi}\!\left[ \operatorname{tr}\!\left( (\widehat{\Sigma}+\lambda \mathbb{I}_k)^{-1}\widehat{\Sigma}(\widehat{\Sigma}+\lambda \mathbb{I}_k)^{-1} bb^\top \right) \right] \\
&\qquad = \operatorname{tr}\!\left( (\widehat{\Sigma}+\lambda \mathbb{I}_k)^{-1}\widehat{\Sigma}(\widehat{\Sigma}+\lambda \mathbb{I}_k)^{-1} \mathbb{E}_{b \sim \pi}[bb^\top] \right) \\
&\qquad = \operatorname{tr}\!\left( (\widehat{\Sigma}+\lambda \mathbb{I}_k)^{-1}\widehat{\Sigma}(\widehat{\Sigma}+\lambda \mathbb{I}_k)^{-1} \frac{\mathbb{E}\|b\|^2}{k} \mathbb{I}_k \right) \\
&\qquad = \frac{\mathbb{E}\|b\|^2}{k} \operatorname{tr}\!\left((\widehat{\Sigma}+\lambda \mathbb{I}_k)^{-1}\widehat{\Sigma}(\widehat{\Sigma}+\lambda \mathbb{I}_k)^{-1} \right).
\end{align*}
Hence the average approximate excess risk takes the same form as in Section \ref{subsec:average}, with $C^2$ replaced by $\mathbb{E}\|b\|^2/k$. It follows that the corresponding characterization of the optimal choice of $\lambda$ extends immediately to the class of elliptical contoured  distributions with finite second moment.

\subsection{Additional Simulation Results}
\label{subsec: additional simulation results}

\subsubsection{DGP-1}

DGP-1 is the isotropic example. The main additional object of interest is the selected tuning ratio $\widehat\lambda_n/n$. Under the population version of the minimax rule in this design, the optimal scaled regularization parameter equals
\[
c^*=\frac{\operatorname{trace}(\Omega)}{\sigma_x^2B^2}=1.
\]

\begin{figure}[h!]
\centering
\includegraphics[width=0.72\textwidth]{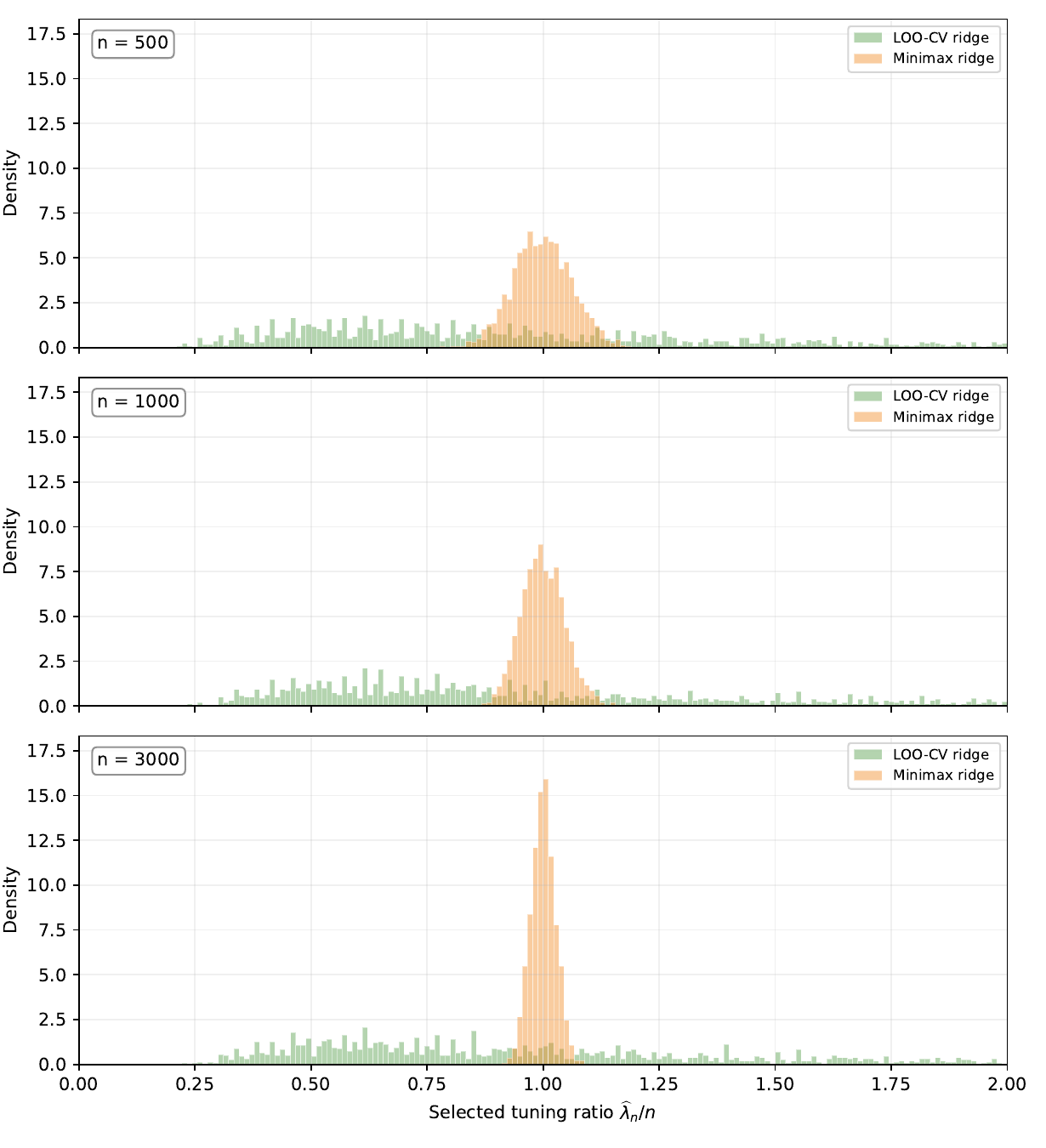}
\caption{Scaled regularization parameter ($\widehat\lambda_n/n$) used by the LOO-CV and minimax ridge estimators under DGP-1}
\label{fig:dgp1-ratio}
\end{figure}

Figure~\ref{fig:dgp1-ratio} reports the distribution of the selected tuning ratio $\widehat\lambda_n/n$ for LOO-CV ridge and minimax ridge. The feasible minimax rule is tightly centered near the population value $c^*=1$, with dispersion decreasing as the sample size increases. In contrast, the LOO-CV tuning ratio is much more dispersed across Monte Carlo repetitions. This helps explain the risk comparison: in DGP-1, minimax ridge uses a tuning parameter close to the risk-minimizing value, whereas LOO-CV often selects substantially different amounts of shrinkage.

\subsubsection{DGP-2}
\label{subsec:dgp2-dgp3-diagnostics}
\label{subsec:dgp2-diagnostics}

This appendix records diagnostic information for the covariance matrix and coefficient vectors used in DGP-2. These diagnostics are useful because the design departs from the isotropic benchmark: the prediction covariance is $\Sigma_2$, not a scalar multiple of the identity matrix, and the coefficient vector is a fixed draw from the spike-and-slab model described in the main text.

First, we verify the equality $\Sigma=\mathbb E[x_i x_i^\top]=\Sigma_2$ used in the simulation section. Let $e_i$ denote the $i$th canonical basis vector in $\mathbb R^n$. Since the $i$th row of $X$ is $x_i^\top=e_i^\top X$, we have
\begin{align*}
\mathbb E[x_i x_i^\top]
&=\mathbb E[X^\top e_i e_i^\top X] \\
&=\Sigma_2^{1/2}\,
\mathbb E\!
\left[Z^\top\Sigma_1^{1/2}e_i e_i^\top\Sigma_1^{1/2}Z\right]
\Sigma_2^{1/2}.
\end{align*}
For any deterministic $n\times n$ matrix $A$ and an $n\times k$ matrix $Z$ with i.i.d. standard normal entries, $\mathbb E[Z^\top A Z]=\operatorname{trace}(A)\mathbb{I}_k$. Applying this identity with $A=\Sigma_1^{1/2}e_i e_i^\top\Sigma_1^{1/2}$ gives
\begin{align*}
\mathbb E[x_i x_i^\top]
&=\Sigma_2^{1/2}\operatorname{trace}\!
\left(\Sigma_1^{1/2}e_i e_i^\top\Sigma_1^{1/2}\right)\mathbb{I}_k\Sigma_2^{1/2} \\
&=\Sigma_2^{1/2}(e_i^\top\Sigma_1 e_i)\mathbb{I}_k\Sigma_2^{1/2} \\
&=\Sigma_2^{1/2}(\Sigma_1)_{ii}\mathbb{I}_k\Sigma_2^{1/2} \\
&=\Sigma_2,
\end{align*}
where the last equality uses $(\Sigma_1)_{ii}=1$. Similarly, because the errors are independent of the regressors, mean zero, and have fixed variances $\sigma_{\epsilon,i}^2$, the corresponding variance matrix for the score is proportional to $\Sigma_2$:
\[
\frac{1}{n}\sum_{i=1}^n \mathbb E[x_i x_i^\top\epsilon_i^2]
=\left(\frac{1}{n}\sum_{i=1}^n \sigma_{\epsilon,i}^2\right)\Sigma_2.
\]

\begin{figure}[h!]
\centering
\includegraphics[width=0.72\textwidth]{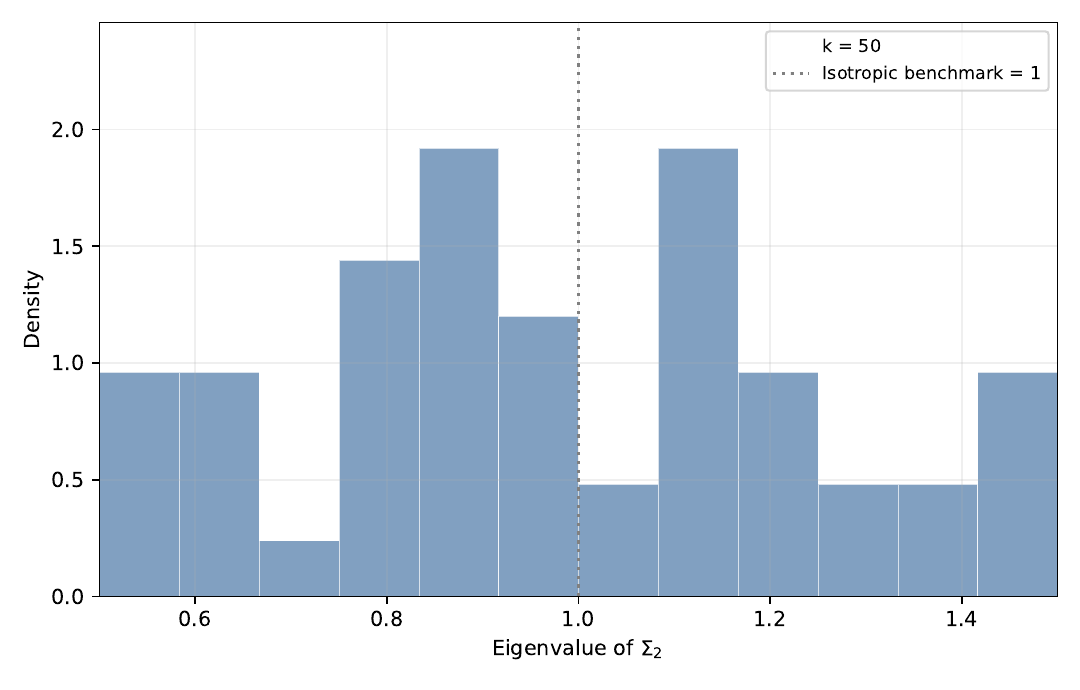}
\caption{Eigenvalues of $\Sigma_2$ under DGP-2}
\label{fig:dgp2-sigma2-eigenvalues}
\end{figure}

Figure~\ref{fig:dgp2-sigma2-eigenvalues} plots the eigenvalues of the fixed $\Sigma_2$ matrix used in DGP-2. The eigenvalues range from 0.502 to 1.479, with mean 0.981 and max-min spread 0.977. The vertical dashed line marks the isotropic benchmark value one. The figure shows that the design is centered near the isotropic benchmark on average but has meaningful dispersion in eigenvalues.

\begin{figure}[h!]
\centering
\includegraphics[width=0.72\textwidth]{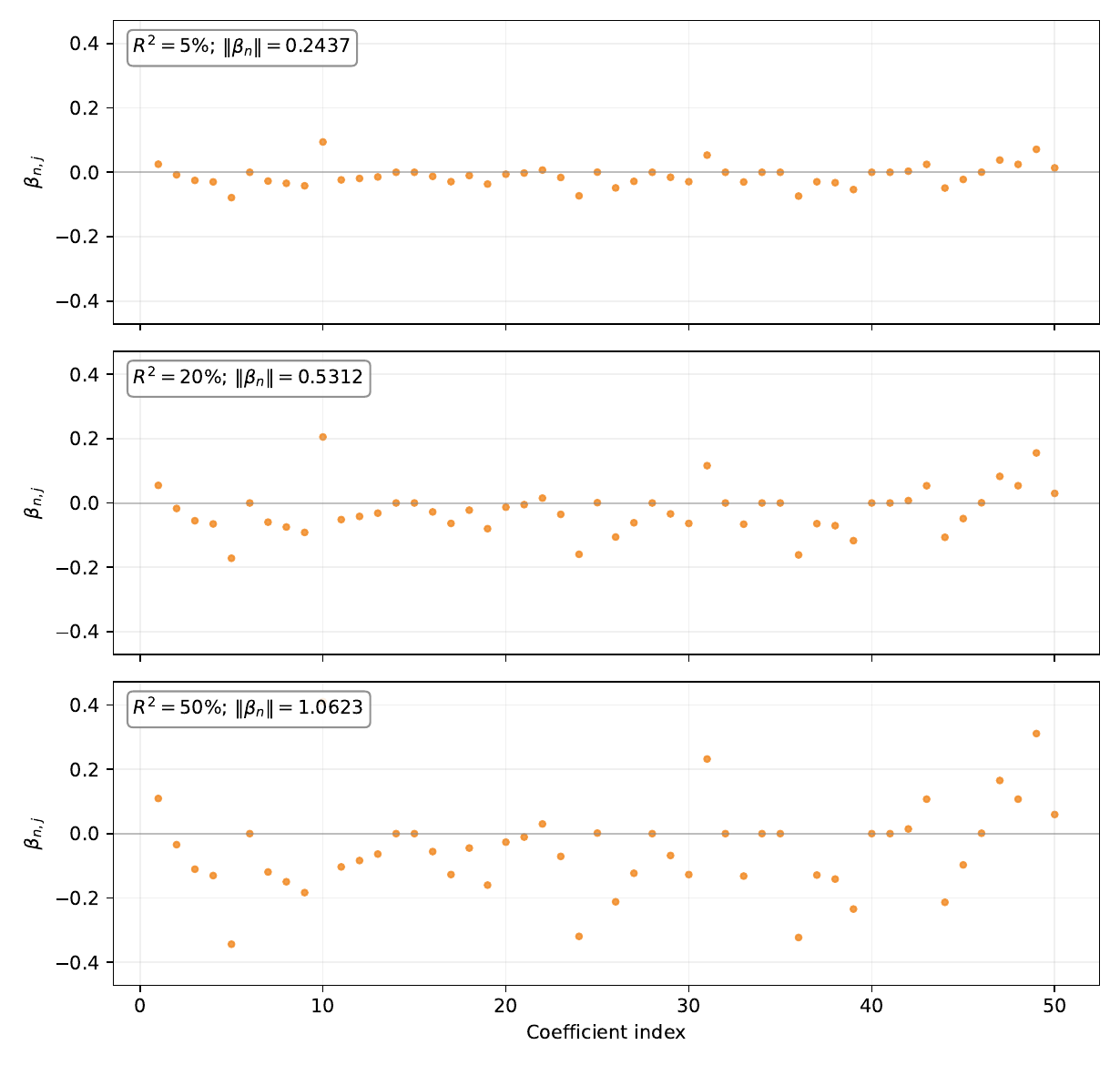}
\caption{Coefficient vectors under DGP-2}
\label{fig:dgp2-beta-coefficients}
\end{figure}

Figure~\ref{fig:dgp2-beta-coefficients} displays the fixed coefficient vector used in DGP-2 after rescaling the same preliminary draw to the three target values of $R^2$. The labels report $\|\beta_n\|$ for each target value. In this draw, 41 out of the 50 coordinates are nonzero. The figure makes explicit how increasing the target $R^2$ changes the magnitude of the coefficient vector while preserving the same coefficient pattern.

\begin{figure}[h!]
\centering
\includegraphics[width=0.72\textwidth]{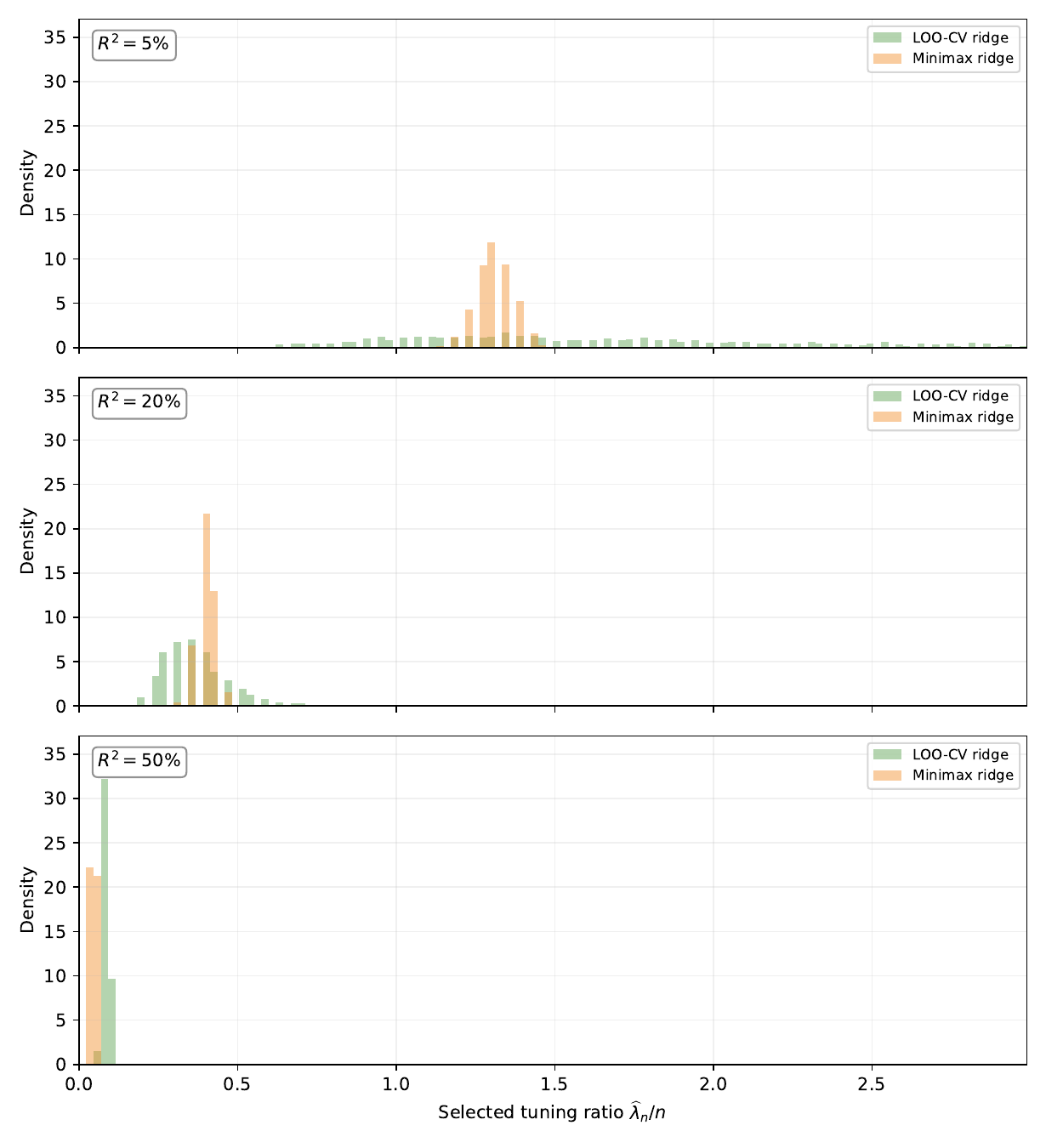}
\caption{Selected tuning ratios under DGP-2}
\label{fig:dgp2-selected-lambda}
\end{figure}

Figure~\ref{fig:dgp2-selected-lambda} reports the distribution of the selected tuning ratio $\widehat\lambda_n/n$ for LOO-CV ridge and minimax ridge under DGP-2. The minimax rule is more concentrated because it is driven by the approximate-risk calculation and the calibrated radius $B$, whereas LOO-CV is more dispersed across Monte Carlo repetitions.

\subsubsection{DGP-3}
\label{subsec:dgp3-diagnostics}

This appendix records diagnostic information for the covariance matrix and coefficient vectors used in DGP-3. DGP-3 uses the same covariance-generation and coefficient-generation schemes as DGP-2, but increases the number of covariates from $k=50$ to $k=300$ while keeping $n=500$. Thus, the diagnostics below should be read as the high-dimensional counterpart to Appendix~\ref{subsec:dgp2-diagnostics}.

The same algebra as in Appendix~\ref{subsec:dgp2-diagnostics} implies that the population prediction covariance is
\[
\Sigma=\mathbb E[x_i x_i^\top]=\Sigma_2,
\]
because the diagonal entries of $\Sigma_1$ are equal to one. Since the errors are independent of the regressors, mean zero, and have fixed variances $\sigma_{\epsilon,i}^2$, the corresponding score variance matrix is
\[
\frac{1}{n}\sum_{i=1}^n \mathbb E[x_i x_i^\top\epsilon_i^2]
=\left(\frac{1}{n}\sum_{i=1}^n \sigma_{\epsilon,i}^2\right)\Sigma_2.
\]

\begin{figure}[h!]
\centering
\includegraphics[width=0.72\textwidth]{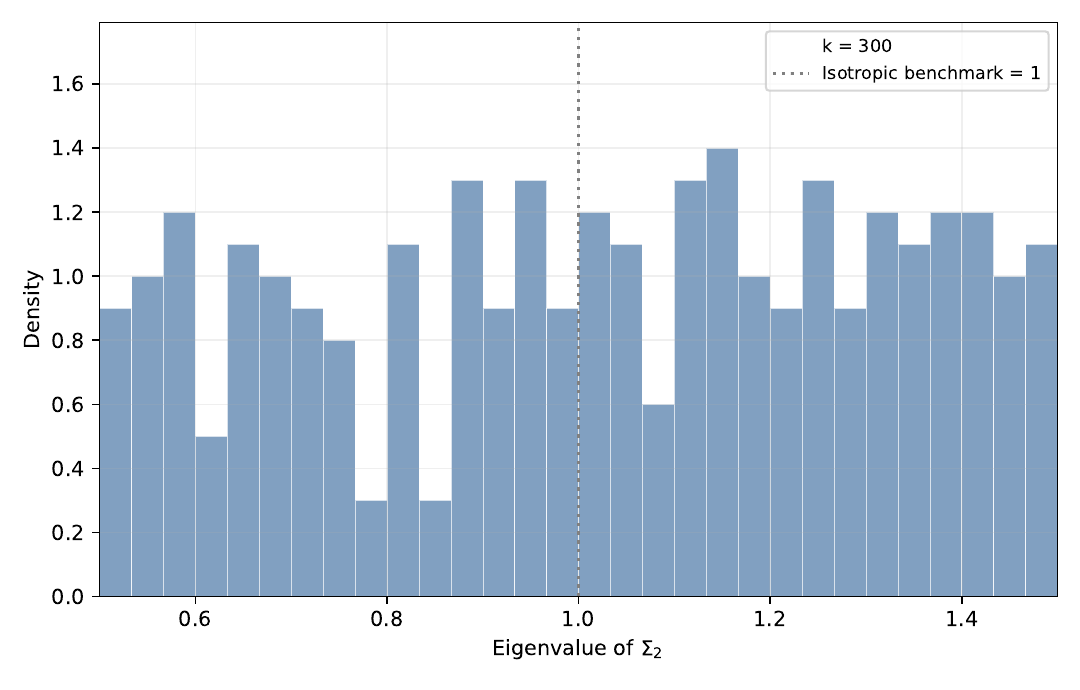}
\caption{Eigenvalues of $\Sigma_2$ under DGP-3}
\label{fig:dgp3-sigma2-eigenvalues}
\end{figure}

Figure~\ref{fig:dgp3-sigma2-eigenvalues} plots the eigenvalues of the fixed $\Sigma_2$ matrix used in DGP-3. The eigenvalues range from 0.502 to 1.499, with mean 1.027 and max-min spread 0.997. As in DGP-2, the design is centered near the isotropic benchmark on average but is clearly non-isotropic.

\begin{figure}[h!]
\centering
\includegraphics[width=0.72\textwidth]{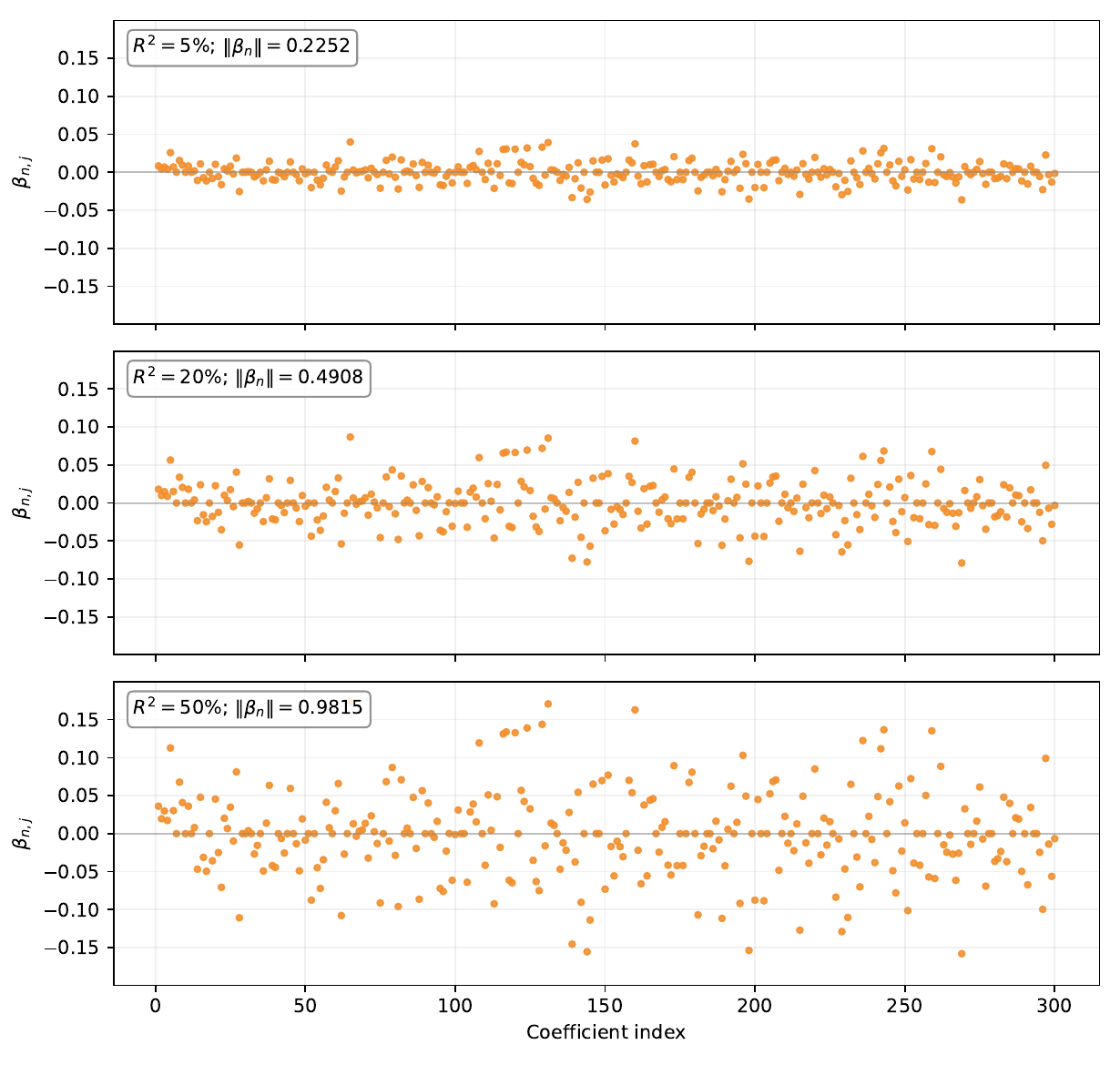}
\caption{Coefficient vectors under DGP-3}
\label{fig:dgp3-beta-coefficients}
\end{figure}

Figure~\ref{fig:dgp3-beta-coefficients} displays the fixed coefficient vector used in DGP-3 after rescaling the same preliminary draw to the three target values of $R^2$. The labels report $\|\beta_n\|$ for each target value. In this draw, 244 out of the 300 coordinates are nonzero. The figure shows that the signal is dense but individually weak, especially for the lower target value of $R^2$.

\begin{figure}[h!]
\centering
\includegraphics[width=0.72\textwidth]{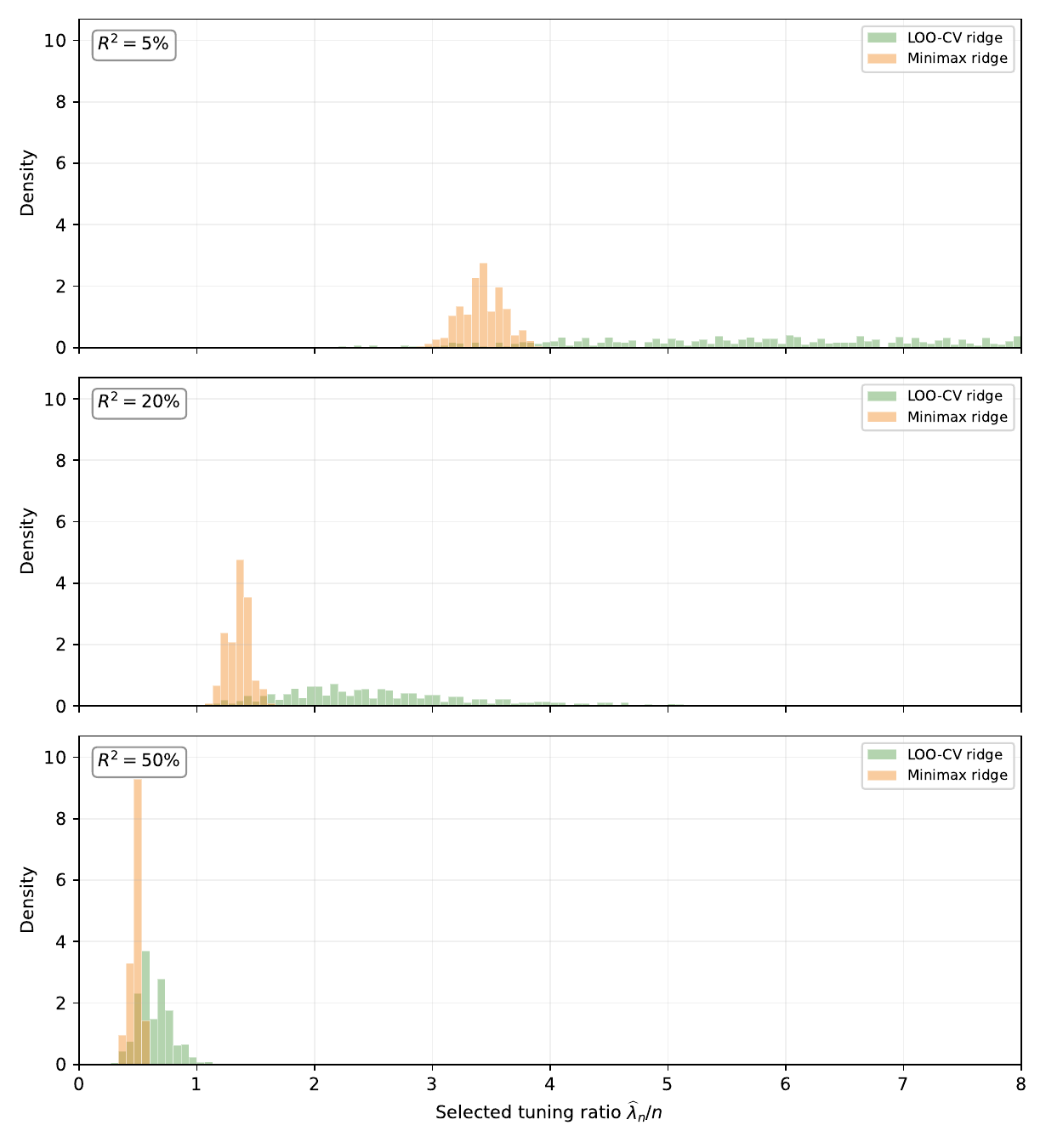}
\caption{Selected tuning ratios under DGP-3}
\label{fig:dgp3-selected-lambda}
\end{figure}

Figure~\ref{fig:dgp3-selected-lambda} reports the distribution of the selected tuning ratio $\widehat\lambda_n/n$ for LOO-CV ridge and minimax ridge under DGP-3. Relative to DGP-2, the LOO-CV choices are substantially more dispersed, especially when $R^2$ is small. This dispersion is consistent with the main-text finding that, in the high-dimensional design, LOO-CV and minimax ridge have comparable average risk but can choose very different penalty values in individual samples.

\newpage

\end{document}